\documentclass{article}

\usepackage{todonotes}
\usepackage{fullpage}
\usepackage{amsfonts}
\usepackage{graphicx}
\usepackage{amsmath}
\usepackage[english]{babel}
\usepackage[colorlinks,bookmarks=false,linkcolor=blue,urlcolor=blue]{hyperref}
\usepackage{amsthm}
\usepackage{verbatim}
\usepackage{enumerate}
\usepackage{natbib}

\usepackage{tikz}
\usetikzlibrary{snakes}
\usepackage{float}
\usepackage{color}
\usepackage{algorithm}
\usepackage{algpseudocode}
\usepackage{caption}

\newtheorem{theorem}{Theorem}[section]

\newtheorem{corollary}[theorem]{Corollary}

\newtheorem{definition}[theorem]{Definition}

\newtheorem{proposition}[theorem]{Proposition}

\theoremstyle{definition}
\newtheorem*{example}{Example}

\begin{document}
\title{Popularity in location games}
\author{Ga\"etan Fournier\footnote{Aix Marseille Univ, CNRS, AMSE, Marseille, France. The project leading to this publication has received funding from the french government under the “France 2030” investment plan managed by the French National Research Agency (reference :ANR-17-EURE-0020) and from Excellence Initiative of Aix-Marseille University - A*MIDEX. " 
} \and Marc Schr\"oder\footnote{Maastricht University, Maastricht, Netherlands}}
\maketitle

\begin{abstract}
We study a variant of the Hotelling-Downs model of spatial competition between firms where consumer choices are influenced by their individual preferences as well as the popularity of the firms. In general, a multiplicity of market equilibria might exist due to the popularity effect. To elucidate firm decision-making, we explore three distinct behavioral attitudes —optimistic, neutral, and pessimistic— towards this multiplicity of equilibria.  For each behavior, we characterize the set of Nash equilibria and measure the impact of the selfish behavior on the social welfare by means of the price of anarchy and price of stability.
\end{abstract}

\noindent\underline{Keywords:} Location game, Efficiency of equilibria, Price of Anarchy

\section{Introduction}

In their decision making processes, economic agents are often influenced by the choices of others. It is for example well documented that more demanded goods tend to be more desirable by customers and that the popularity of a candidate serves as a powerful asset in persuading new voters\footnote{We refer to \cite{nadeau1993new} and \cite{dizney1962investigation} for the two mentioned examples.}. The bandwagon effect, herding and conformism are societal manifestations of the positive externalities arising from the majority's choices on individuals. However, these effects can be in contradiction with the individual preferences of agents. The current paper aims to explore their consequences on a Hotelling-Downs competitive environment.\\
~~\\
To do so, we first investigate how economic agents trade-off between (1) taking an action in line with their idiosyncratic preferences and (2) aiming to select a popular decision. We then study how competitors, such as firms or political parties, anticipate this trade-off when selecting their locations in the first place. This location can be interpreted as the characteristics of a good to produce or the political platform to announce for an election, and we study how it depends on the popularity effect at equilibrium.\\
~~\\
\underline{First motivating example:} Consider a market where two firms want to maximize their clientele by selling a product whose price is exogenously determined\footnote{This model applies for example to newsstands, pharmacies, or franchises of different types of services and products. It also applies to the media markets or to any sector where consumers are not charged with direct prices but through advertising.}.  We suppose that the good has a $1$-dimensional characteristic and that, in the absence of externality, consumers would choose to buy one unit of the product whose characteristic is the closest to their ideal product. The contribution of the current paper is to add, in the consumers' preferences a term that represents their inclination towards the most popular good, that we model with the addition of a linear function of the market size of the firm that sells this good. Several arguments support the existence of such a term, let us mention the social proof: popularity being perceived as a signal of quality when agents have asymmetric information (see \cite{mutz1998impersonal} for example\footnote{In \cite{mutz1998impersonal}: "consumers who need a new car but lack the expertise to judge which one is best, they take majority opinion as a proxy for the most intelligent choice: If so many people are in favor of a particular alternative, there must be something to it, and it can therefore safely be chosen without further detailed knowledge."}) and the need for conformism of economic agents (see \cite{corneo1997conspicuous}).\\
~~\\
\underline{Second motivating example:} Consider an election where two candidates select platforms $x_1,x_2 \in [0,1]$ to attract as many voters as possible\footnote{A proportional representative election, or a winner-takes-all election where voters have incomplete information about political parties, see Lindbeck and Weibull (1987), or where parties have incomplete information about the distribution of voters’ preferences Patty (2002).}. On top of considering the distance between their ideal policy and the proposed platforms, the electorate also observes poll reports that influence their decision. Their effect on voter preferences is well understood, see for example \cite{mehrabian1998effects}, and the underlying mechanisms include the desire to be on the winning side, the apprehension of casting a futile vote, and the heightened media exposure of popular candidates.\\
~~\\
Stock market frenzies\footnote{See \cite{spyrou2013herding} for a review of herding in financial markets.} or fashion trends are other manifestations of the positive externalities of popular choices on consumers' decision-making. For simplicity, in the sequel of the paper, we use the terminology of the first example: players are firms that aim at maximizing their clientele.\\
~~\\
\textbf{Our model.} Two firms simultaneously select the characteristics of the product they sell $x_1,x_2 \in [0,1]$.  Once locations are selected, we study the choices of consumers, who 
buy from the firm that maximizes their utility. A consumer $t \in [0,1]$, identified with its preferred characteristics, has a utility to buy good $i\in \{1,2\}$ which is a sum of two terms. The first one is a cost that we take equal to the distance between the product characteristic $x_i$ and the ideal of the consumer $t$, which is standard in location games. The second term is an increasing linear function of the market size of good $i$, the linear coefficient being a proxy for the magnitude of the popularity effect. Determining the decision of a consumer requires computing the market shares that in turn requires knowing the decision of consumers. Solving this equation leads to the computation of an equilibrium that we call the market equilibrium. Such a market equilibrium is guaranteed to exist for every pair of locations.\\
~~\\
\textbf{Our results.} Following the spirit of backward induction, we start by considering the choice of the consumers for fixed characteristics $x_1,x_2$. We find that there exists, in general, a multiplicity of market equilibria. We describe up to $5$ possible outcomes: firm $1$ or $2$ covers the full market, they split the market with a unique indifferent consumer located between $x_1$ and $x_2$, or they split the market with all consumers on the left to $x_1$, respectively to the right to $x_2$, being indifferent. Obviously, among these different equilibria, some favor one firm or the other, and the anticipation of the market reaction is key to select the ex-ante locations $x_1,x_2$. Therefore, we introduce several possible behaviors for firms, depending on which market equilibria they trust to be realized after the choice of their locations. Namely, we consider "pessimistic", "neutral" or "optimistic" firms, depending on their beliefs regarding the market reaction when a firm deviates from the status-quo. A pessimistic (resp. optimistic) firm anticipates the worst (resp. best) market equilibrium to appear, that is the equilibrium that minimizes (resp. maximizes) its market share. On the other hand, a neutral firm computes its expected market share when all equilibria have the same probability.\\
~~\\
We find different equilibrium structures dependent on the behavior of firms: there is no equilibrium when firms are optimistic, as their optimism makes deviation profitable in general. When firms are neutral, we find that there exists an equilibrium if and only if the magnitude $a$ of the positive externality is smaller than $\frac12$ (which is the marginal gain for selecting a location closer to the opponent, in the absence of popularity effect). In this case, the effects of the competition over characteristics dominate the effects of the competition over popularity and we find that the principle of minimal differentiation applies: both firms select the same central location, as in the standard Hotelling-Downs model.\\
~~\\
We find a richer equilibrium structure for pessimistic firms. In this case, we find potential differentiated equilibrium: different characteristics are selected in equilibrium. The resulting differentiation is explained by the fact that firms prefer to cultivate separate customers' bases and not risk losing all their market shares by competing too strongly with their competitors. We indeed find that the positive externality of popularity prevents a pessimistic firm from deviating too close to its opponent, which is the strong incentive that generates a convergence of firms in the standard Hotelling-Downs model. In the case where the two products are too similar, the market equilibria include situations where pessimistic firms anticipate losing their entire clientele.\\
~~\\
Finally, we investigate the welfare consequences of the popularity effects. A term like herding has a negative connotation and is generally negatively perceived. We are not aware of any previous analysis of its impact on welfare with strategic firms. In our model, we observe two contradictory impacts: while efficient coordination between agents can lead to an important surplus, the effect could also have negative consequences, such as reducing the diversity of offered options, which in turn impact negatively the welfare of consumers. Interestingly, our welfare analysis differs from the standard discussion in industrial organization or in political economy, where the inefficiency of equilibrium relies specifically on firms or parties to differentiate too much (polarization) or not enough (homogenization). The popularity effect modifies this perspective: when $a$ is large enough for example, the social optimum makes consumers indifferent about the differentiation of the two firms, they both buy from a popular firm even if the product differs significantly from their ideal.\\ 

The total firms' profit being constantly equal to 1, we can restrict the welfare analysis to the consumers' side without loss of generality. On the firm's side, we only investigate asymmetry in payoffs and we find that this asymmetry can be greater with larger $a$, reaching the most extreme situation when $a \rightarrow \frac12$. On the consumer's side, we compute the optimal configuration $(x_1^*,x_2^*)$ that maximizes the consumers' surplus. Together with the equilibrium characterization, we can derive the price of anarchy and the price of stability, namely the ratio between the worst (resp. best) equilibrium outcome and the social optimum. Surprisingly, we find a non-monotonic effect of the magnitude of the positive externality, measured by the parameter $a$: in the case of neutral firms, they are both decreasing then increasing, reaching their minimum when $a=\frac14$. In the case of pessimistic firms, the price of anarchy is first decreasing then increasing with $a$, reaching its minimum when $a=\frac{\sqrt{57}-7}{4}\simeq 0.137$ \footnote{In the case where the willingness to buy $\theta$ is set to $1$. We show more generally that the minimum is reached when $a=\frac{1-8\theta+\sqrt{64 \theta ^2 - 16 \theta + 9}}{4}$.}. The price of stability is not monotonic either, before reaching $1$ for $a \geq \frac12$.\\
~~\\
\textbf{Related literature.}

The trend of economic agents preferring popular decisions was introduced by the seminal paper \cite{asch2016effects}, and was soon confirmed by many empirical and experimental evidence. We refer to \cite{marsh1985back} and \cite{nadeau1993new} for summaries of the previous research of empirical evidence in political economy. A recent survey of the experimental literature can be found in \cite{fatas2018preference}. The underlying mechanisms of the preference for popular decisions include the fact that actions can be a signal of quality in the presence of uncertainty (\cite{banerjee1992simple}; \cite{bikhchandani1992theory}), the fact that similar actions might create mutual positive externalities (see, e.g., \cite{katz1986technology}; \cite{banerjee1990peer}) or the fact that individuals have an inherent tendency to identify with a certain class of people (\cite{bernheim1994theory}; \cite{akerlof1980theory}).\\

In our paper, consumers have a preference for popular goods, and firms strategically anticipate this preference when selecting the characteristics of the good they produce. We therefore contribute to the literature concerned with the differentiation of products. In location games, the differentiation of firms is often explained by the existence of a second stage, for example, a price competition (\cite{d1979hotelling}), a potential entry of a third firm (\cite{palfrey1984spatial}) or an oligopoly (\cite{eaton1975principle}). In our paper, firms only play once and no entry is possible, however, pessimistic firms might differentiate at equilibrium to secure their clientele. Indeed, when deviating closer to their opponent, firms create a possible threat where the market equilibrium is unfavorable.\\
~~\\
A similar location game is studied in \cite{kohlberg1983equilibrium,peters2018hotelling,feldotto2019hotelling}, where the authors consider negative externalities. In this setting the market reaction is unique \cite{kohlberg1983equilibrium} and based on this assumption, \cite{peters2018hotelling} was able to characterize equilibria for a small, but even number of firms whenever the externality effect is strong enough. Equilibria never exist if we have an odd and small number of firms. Note that negative externalities have a very different impact compared to positive ones. Indeed, negative externalities introduce a moderating effect on popularity, whereas positive externalities have an amplifying effect. This amplification leads to a multiplicity of market equilibria.\\
~~\\
The market game played between consumers without location effects corresponds to a non-atomic congestion game with linearly decreasing cost functions. The common solution concept for these games is the Wardrop equilibrium, which is known to exist due to \cite{beckmann1956studies}. The inefficiency of Wardrop equilibria was first analyzed by \cite{roughgarden2002bad} and was later characterized by means of simpler proofs in \cite{correa2008geometric}. However, most of these inefficiency bounds apply to increasing cost functions. An important other class of non-atomic games with decreasing cost functions are networks design games, see e.g. \cite{anshelevich2008price}.  In contrast to non-atomic games in which the identity of players is irrelevant, we assume that our consumers are uniformly spread over the unit interval which has some consequences for the market equilibria that we study.\\
~~\\
We also contribute to the literature on political economy that study the bandwagon effect on elections (see \cite{barnfield2020think} for a recent survey on this topic). The literature that analyzes the effects of public opinion polls prior is surveyed in \cite{irwin2000bandwagons}. Experimental studies focused on the electoral effect of publishing opinion polls and highlighted bandwagon effects (\cite{klor2007welfare}, \cite{grosser2010public}). A large part of the literature focused on the effect of the voting system on the bandwagon effect (\cite{myerson1993theory}) and its relationship with the electorate participation (\cite{grillo2017risk}). We investigate in the current paper the relationship between the bandwagon effect and the differentiation of candidates.\\
~~\\
Our analysis is connected to the literature concerned with peer effects (see \cite{bramoulle2020peer} for a recent survey). Our approach however differs from the study of peer effects on networks: we do not assume any structure between consumers\footnote{However, in our model, all consumers buying the same product are on the same side of the indifferent consumer, which is an assortative effect.} and simplify the diffusion by supposing that the continuum of consumers is simply influenced by the mass of consumers selecting the same product. This simplification allows us to construct a first-stage game, where the characteristics of the products are chosen by firms who anticipate the bandwagon effect. In our setting, only firms, and not consumers, have strategic interactions.  

\section{The model}
Two firms simultaneously select their location $x_1,x_2$ on the characteristic space $[0,1]$. We assume without loss of generality that $x_1\leq x_2$. After locations are announced, consumers buy one unit of the good that maximizes their utility. We assume that consumer $v$'s utility to select good $i$ is given by:
$$u_v^i=\theta+a s_i - |v-x_i|$$
where $s_i$ is the quantity of consumers that select product $i$ (also referred to as the market share of firm $i$), $a \in (0,1)$ is the relative magnitude of the positive externality of the market share, and $\theta$ is the intrinsic utility of buying the good (net of its fixed price). Because we suppose that the demand is inelastic, we assume that $\theta\geq 1$, which is sufficiently large to explain why consumers always prefer to buy.\\
Note that computing $s_i$ requires to compute $u^i_v$, whose expression involves in turn $s_i$. Therefore, to precise what is the market reaction after locations $x_1,x_2$ are chosen, we define a market equilibrium as follows:

\begin{definition}[Market equilibrium]
After firms have selected locations $(x_1,x_2)$, we say that $(s_1,s_2)$ is a market equilibrium if $s_2=1-s_1$, $0\leq s_1\leq1$ and for all $v\in[0,s_1)$, 
$$u_v^1 \geq u_v^2,$$
and for all $v\in[s_1,1)$, 
$$u_v^1 \leq u_v^2$$
\end{definition}


In the case where an interval of consumers $[v_1,v_2]$ is indifferent between buying from $x_1$ or $x_2$, with $x_1<x_2$, we suppose for simplicity and without loss of generality that the left interval $[v_1,\tilde{v}]$ shops to $x_1$ and the right interval $[\tilde{v},v_2]$ shops to $x_2$, for a certain $\tilde{v}$.

A key result for our paper is that a multiplicity of market equilibria can exist, as the following example illustrates.

\begin{example}
Assume that $a=\frac12$ and $(x_1,x_2)=(\frac13,\frac23)$.
Then $(s_1,s_2)\in\{(0,1),(\frac16,\frac56),(\frac12,\frac12),(\frac56,\frac16),(1,0)\}$ are all market equilibria. To understand why $(\frac16,\frac56)$ is one of the possible market equilibria, we show that consumer $v=\frac{1}{6}$ is indifferent between buying from firm $1$ or $2$ when market shares are indeed equal to $(\frac16,\frac56)$: buying from firm $1$ provides him a utility equal to $\theta+\frac{1}{2}\cdot\frac{1}{6}-|\frac13-\frac16|=\theta-\frac{1}{12}$ while buying from firm $2$ provides him a utility equal to $\theta+\frac{1}{2}\cdot\frac{5}{6}-|\frac{1}{6}-\frac23|=\theta-\frac{1}{12}$. The same argument holds for all consumers in $v\in [0,\frac{1}{3}]$ while $u_v^1>u_v^2$ for all consumers in $v\in (\frac{1}{3},1]$.

The fact that the indifferent consumer can locate either in $(x_1,x_2)$, in $(0,x_1]$, in $[x_2,1)$ or at the boundaries $0$ or $1$ provides up to $5$ possible market equilibria.

\end{example}

We define firms' payoffs as their market shares $\pi_1=s_1$ and $\pi_2=s_2$. In the first motivating example provided in the introduction, this hypothesis is made without loss of generality as long as the marginal production cost never exceeds the fixed price, so it is always beneficial to sell more. In the second motivating example, this hypothesis supposes that we analyze a proportional representative election, or that there is imperfect information (as discussed in the introduction). In order to decide whether a quadruple $(x_1,x_2,s_1,s_2)$ is an equilibrium, we have to cope with the multiplicity of market equilibria.\\
~~\\
When firms select $(x_1,x_2)$, they cannot anticipate their payoffs as the consumers' reaction can lead to different market shares. We take advantage of this multiplicity of market equilibria to study different firms' behaviors regarding the uncertainty. Formally, the behavior of a firm is a function that maps any deviation $(x'_1,x'_2)\neq (x_1,x_2)$ to a deviation vector payoff $(s_1(x'_1,x'_2),s_2(x'_1,x'_2))$, which is computed by means of the market equilibria. We focus our attention on the three possible behaviors:

\begin{definition}\label{def:behav}[Firms' behaviors]~~\\
If firms assume that the market equilibrium that minimizes their own payoff will occur after locations are chosen (worst-case scenario) we say that they are are \textbf{pessimistic}.\\ 
If firms assume that the market equilibrium that maximizes their own payoff will occur after locations are chosen (best-case scenario) we say that they are are \textbf{optimistic}.\\
Finally, if firms suppose that every market equilibrium has the same probability to occur after locations are chosen (average scenario) we say that they are are \textbf{neutral}.
\end{definition}

When firms' behaviors are fixed, we say that a quadruple $(x_1,x_2,s_1,s_2)$ is a \textit{Nash equilibrium (NE)} if $(s_1,s_2)$ is a market equilibrium for $(x_1,x_2)$, and $s_1\geq s_1(x'_1,x_2)$ for all $x'_1\neq x_1$ and $s_2\geq s_2(x_1,x'_2)$ for all $x'_2\neq x_2$.


\section{Characterization of Nash Equilibria}

We first study market equilibria by assuming that the pair of locations $(x_1,x_2)$ are fixed.

\subsection{Characterization of market equilibria}

We still assume, without loss of generality, that $x_1 \leq x_2$.
\begin{proposition} \label{pro:vote}
If $x_2-x_1>a$, then there is a unique market equilibrium:
\begin{enumerate}
\item[(i)] $s_1=\frac{x_1+x_2-a}{2(1-a)}$. In this case $[0, s_1)$ strictly prefer firm $1$ and $(s_1,1]$ strictly prefer firm $2$.
\end{enumerate}
If $x_2-x_1\leq a$, then there exist up to $5$ market equilibria:
\begin{enumerate}
\item[(i)] $s_1=0,s_2=1$. In this case, all consumers prefer firm $2$. 
\item[(ii)] $s_1=\frac{1}{2}-\frac{x_2-x_1}{2a}$. In this case, $[0,x_1)$ are indifferent and $(x_1,1]$ strictly prefer firm $2$. This firm exists if and only if $$s_1 \in [0,x_1) \Leftrightarrow  a\leq x_2-(1-2a)x_1.$$
\item[(iii)] $s_1=\frac{x_1+x_2-a}{2(1-a)}$. In this case $[0, s_1)$ strictly prefer firm $1$ and $(s_1,1]$ strictly prefer firm $2$. This firm exists if and only if $$s_1 \in [x_1,x_2] \Leftrightarrow x_1-(1-2a)x_2 \leq a \leq x_2-(1-2a)x_1.$$
\item[(iv)] $s_1=\frac{1}{2}+\frac{x_2-x_1}{2a}$. In this case $[0,x_2]$ strictly prefer firm $1$ and $(x_2,1]$ are indifferent. This firm exists if and only if $$s_1 \in (x_2,1] \Leftrightarrow x_1-(1-2a)x_2 \leq a.$$
\item[(v)] $s_1=1,s_2=0$. In this case, all consumers prefer firm $1$.
\end{enumerate}
\end{proposition}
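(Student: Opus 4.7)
The plan is to reduce the market-equilibrium condition to a geometric statement about the auxiliary function
\[ \Delta(v, s_1) := u_v^1 - u_v^2 = a(2s_1 - 1) + \bigl(|v-x_2| - |v-x_1|\bigr). \]
For $x_1 \leq x_2$ the map $v \mapsto \Delta(v, s_1)$ is piecewise linear: it equals the constant $a(2s_1-1) + (x_2 - x_1)$ on $[0, x_1]$, is strictly decreasing with slope $-2$ on $[x_1, x_2]$, and equals the constant $a(2s_1-1) - (x_2 - x_1)$ on $[x_2, 1]$. The definition of market equilibrium requires $\Delta(\cdot, s_1) \geq 0$ on $[0, s_1)$ and $\Delta(\cdot, s_1) \leq 0$ on $[s_1, 1)$; since $\Delta$ is weakly decreasing in $v$, continuity then either forces $\Delta(s_1, s_1) = 0$ whenever $s_1$ lies in the strictly decreasing region, or it pins the value of a flat plateau containing $s_1$ to zero.

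Next I would carry out a case analysis based on where $s_1$ sits relative to $x_1$ and $x_2$. If $s_1 \in (x_1, x_2)$, strict monotonicity forces $\Delta(s_1, s_1) = 0$; solving the resulting linear equation yields case (iii) with $s_1 = \frac{x_1+x_2-a}{2(1-a)}$, and rearranging $s_1 \in [x_1, x_2]$ reproduces the stated bounds on $a$. If $s_1 \in (0, x_1]$, the constancy of $\Delta$ on $[0, x_1]$ combined with the two sign requirements forces $\Delta \equiv 0$ on $[0, x_1]$, giving case (ii) with $s_1 = \tfrac{1}{2} - \tfrac{x_2-x_1}{2a}$; the bounds $0 < s_1 \leq x_1$ translate into $x_2 - x_1 \leq a$ and $a \leq x_2 - (1-2a)x_1$. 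The corner $s_1 = 0$ (case (i)) simply requires the left plateau of $\Delta$ to be nonpositive, i.e.\ $x_2 - x_1 \leq a$. The regimes $s_1 \in [x_2, 1)$ and $s_1 = 1$ are handled by the symmetric argument, producing cases (iv) and (v).

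For the first statement, when $x_2 - x_1 > a$, cases (i) and (v) are ruled out because the corresponding plateaus of $\Delta$ have the wrong sign, while cases (ii) and (iv) are ruled out because their formulas would force $s_1 < 0$ or $s_1 > 1$. Thus only case (iii) can realize, and one checks it does: the closed form $s_1 = \frac{x_1+x_2-a}{2(1-a)}$ lies in $[x_1, x_2]$ under the hypothesis because $x_2 - x_1 > a \geq a(2x_2 - 1)$ gives $s_1 \leq x_2$, and the analogous inequality on the other side gives $s_1 \geq x_1$. Uniqueness is then immediate from the strict monotonicity of $\Delta$ on $(x_1, x_2)$.

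The main obstacle is not conceptual but accounting: carefully tracking which endpoints ($s_1 = x_1$ or $s_1 = x_2$) are assigned to which case so that the existence conditions in the proposition are reproduced verbatim, and checking that the algebraically natural conditions ($0 \leq s_1 \leq x_1$, $x_1 \leq s_1 \leq x_2$, etc.) coincide with the rearranged linear conditions stated. Beyond this, the proof is a methodical enumeration driven entirely by the piecewise-linear, weakly decreasing structure of $\Delta$.
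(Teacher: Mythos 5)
Your proposal is correct and follows essentially the same route as the paper's proof: a case analysis on where the indifferent consumer sits (left plateau, middle segment, right plateau, or the corners), solving the linear indifference condition in each region and reading off the existence conditions from the requirement that the solution lie in the corresponding interval. Packaging the argument through the weakly decreasing piecewise-linear function $\Delta(v,s_1)$ is a slightly cleaner way to organize the same computation, but it is not a different method.
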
 

The economic interpretation of the two distinct scenarios is the following: in case where $x_2-x_1>a$, the two products exhibit distant characteristics. In this setting, comparing the products' characteristics is very significant for the consumers' choice, and the influence of a product's popularity is diminished. We find that there the indifferent consumer lies in the interval between $x_1$ and $x_2$ as in the case where the popularity effect does not exist. More precisely, we find that the indifferent consumer locates at $\frac{x_1+x_2-a}{2(1-a)}$, which converges to $\frac{x_1+x_2}{2}$ when $a \rightarrow 0$, as commonly assumed in location games without popularity effects ($a=0$).\\
On the other hand, when firms opt for proximate locations, the popularity of a product assumes a critical role. With high enough popularity, firm $2$ attracts a consumer $v$ located on the left of firm $1$. The amplifying effect of the popularity generates outcomes that are less stable and we obtain a multiplicity of equilibria.

\begin{proof}
First assume that $x_2-x_1 > a$. We solve the equation $u_v^1=u_v^2$ to find the indifferent consumer $v$. We start by proving that the case where $v < x_1$ is not possible:
$u_v^1 \geq u_v^2$ is equivalent to $\theta+av-x_1-v=\theta+a(1-v)-x_2-v$ which simplifies to $2av-a+x_2-x_1=0$. Because we assumed $x_2-x_1 > a$, we find that $v<0$. In other words, we should have $u_0^1<u_0^2$, which contradicts our hypothesis. A symmetric argument implies that the case $v> x_2$ is not neither possible. We conclude with the case where $v \in (x_1,x_2)$ where solving $u_v^1 \geq u_v^2$ leads to $v=\frac{x_1+x_2-a}{2(1-a)}$, which is always a valid solution as it belongs to $(x_1,x_2)$ as soon as $x_2-x_1>a$.\\
Assume now that $x_2-x_1 \leq a$. We repeat the same exercise, assuming respectively that $v \in (0,x_1)$, $v \in (x_2,1)$, $v \in [x_1,x_2]$ and $v \in \{0,1\}$. However, we find no contradiction in this case. In each case, the constraint that the indifferent consumer belongs to the appropriate interval provides necessary and sufficient conditions.  
\end{proof}

\begin{corollary}
There exists $1$, $3$ or $5$ market equilibria.\\
More precisely, there exists a unique market equilibrium when $x_2-x_1 > a$ and there exists either $3$ or $5$ market equilibria when $x_2-x_1\leq a$.
\end{corollary}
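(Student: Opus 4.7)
The case $x_2-x_1>a$ is immediate from Proposition~\ref{pro:vote}, so I would focus on the case $x_2-x_1\leq a$. Here Proposition~\ref{pro:vote} lists five candidate equilibria, of which the two monopoly equilibria (i) and (v) are unconditional. Therefore we always have at least those two, and the question reduces to counting how many of (ii), (iii), (iv) are present. My plan is to show that this count is always $1$ or $3$, which yields exactly $3$ or $5$ market equilibria.

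The first key observation, which is essentially bookkeeping, is that the existence condition for (iii), namely
\[
x_1-(1-2a)x_2 \leq a \leq x_2-(1-2a)x_1,
\]
is the conjunction of the existence conditions for (ii) and (iv). So (iii) exists if and only if both (ii) and (iv) do. Consequently, among (ii), (iii), (iv) we can never have exactly two present: the count is $0$, $1$, or $3$.

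The main step is then to rule out the count $0$, i.e.\ to show that at least one of (ii) or (iv) must exist whenever $x_2-x_1\leq a$. The plan is to assume by contradiction that both fail, giving
\[
a(1-2x_1) > x_2-x_1 \quad\text{and}\quad a(2x_2-1) > x_2-x_1,
\]
and then add the two inequalities. The left-hand sides collapse to $2a(x_2-x_1)$, yielding $a(x_2-x_1)>x_2-x_1$. If $x_2>x_1$ this forces $a>1$, contradicting $a\in(0,1)$. The degenerate case $x_1=x_2$ has to be checked separately: there the two existence conditions reduce to $x_1\geq 1/2$ and $x_1\leq 1/2$ respectively, so at least one (in fact, at least one, with both holding only at $x_1=1/2$) is always true.

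The only real obstacle is this last additive trick; the rest is reading off Proposition~\ref{pro:vote}. Combining the steps: when $x_2-x_1\leq a$, the total number of market equilibria is $2 + k$ where $k\in\{1,3\}$ is the number of non-monopoly equilibria, giving $3$ or $5$ as claimed.
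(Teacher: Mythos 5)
Your proof is correct and follows essentially the same route as the paper's: types (i) and (v) always exist, type (iii) exists exactly when both (ii) and (iv) do, and at least one of (ii), (iv) must exist, which forces the count to be $3$ or $5$. The paper simply asserts these three facts by inspection of the existence conditions in Proposition~\ref{pro:vote}, whereas you spell out the verification (the additive contradiction, which in fact also covers the case $x_1=x_2$ since it yields $0>0$ directly), so your write-up just supplies the bookkeeping the paper leaves implicit.
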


\begin{proof}
By considering the necessary and sufficient existence conditions provided in Proposition \ref{pro:vote} in the case where $x_2-x_1\leq a$, we find that:\\
- equilibria of type $(i)$ and $(v)$ always exist.\\ 
- if $(ii)$ does not exist, then $(iv)$ exists, and if $(iv)$ does not exist, then $(ii)$ exists.\\ 
- $(ii)$ and $(iv)$ both exist if and only if $(iii)$ exists.
\end{proof}

We are now ready to study the choice of locations by firms. As described in definition \ref{def:behav}, we study successively the case where firms are optimistic, neutral and pessimistic.

\subsection{Optimistic firms}
Suppose first that firms are optimistic. They assume that the market equilibrium that will emerge after a deviation of their location from the equilibrium is the one that maximizes their own payoff.  

\begin{proposition}
When firms are optimistic, there is no NE.    
\end{proposition}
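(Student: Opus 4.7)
The plan is to exhibit, for every candidate Nash equilibrium $(x_1,x_2,s_1,s_2)$, a firm with a strictly profitable deviation under the optimistic criterion. The key mechanism is that two firms located within distance $a$ of one another always admit market equilibria in which one of them captures the entire market, namely cases (i) and (v) of Proposition~\ref{pro:vote}. Optimism converts this multiplicity into a strong incentive to cluster.

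First, since $(s_1,s_2)$ must be a market equilibrium, $s_1 + s_2 = 1$, so at least one firm — say firm $1$ — satisfies $s_1 < 1$. I would then produce a location $x'_1 \neq x_1$ with $|x'_1 - x_2| \leq a$; since $a > 0$, such a point always exists (take $x'_1 = x_2$ if $x_1 < x_2$, and $x'_1 = x_2 + \varepsilon$ for a small $\varepsilon > 0$ if $x_1 = x_2$). At the pair $(x'_1, x_2)$, Proposition~\ref{pro:vote} guarantees a market equilibrium in which firm $1$ captures the whole market: case (v) if $x'_1 \leq x_2$, or case (i) after relabeling so that the smaller coordinate comes first if $x'_1 > x_2$. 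Under optimism, firm $1$'s perceived deviation payoff is therefore $s_1(x'_1, x_2) = 1 > s_1$, contradicting the Nash condition. By symmetry, the analogous argument handles the case $s_2 < 1$.

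The main (mild) obstacle is bookkeeping around the $x_1 \leq x_2$ convention: one must relabel the two firms in the statement of Proposition~\ref{pro:vote} when the deviating firm jumps past its opponent, and separately handle the degenerate situation $x_1 = x_2$ (where ``moving closer'' requires choosing an arbitrary direction, but only an infinitesimal displacement is needed since $a > 0$). Once these details are pinned down, the proof reduces to invoking the ``full-market'' equilibria furnished by the popularity-induced multiplicity in Proposition~\ref{pro:vote}.
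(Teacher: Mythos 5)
Your proposal is correct and follows essentially the same route as the paper: deviate into the $a$-neighborhood of the opponent, invoke the full-market equilibria (i)/(v) of Proposition~\ref{pro:vote}, and note that since $s_1+s_2=1$ some firm has $s_i<1$, so its optimistic deviation payoff of $1$ is strictly profitable. The relabeling and $x_1=x_2$ bookkeeping you flag are harmless details (just keep the perturbation inside $[0,1]$), and the paper's proof glosses over them in exactly the same way.
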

\begin{proof}
Indeed, suppose that an NE exists and denote $(x_1,x_2,s_1,s_2)$ the equilibrium locations and markets shares. If firm $1$ deviates to $x_1' \in [x_2-a,x_2+a]$, then $(s_1,s_2)=(1,0)$ is a possible market equilibrium, therefore firm $1$'s deviation payoff is $1$. A symmetric argument applies to firm $2$, so one of the two deviations is strictly profitable as $s_1+s_2=1$.
\end{proof}

The economic interpretation is straightforward: when a firm deviates to locate close to its opponent, the multiplicity of market equilibria includes the extreme case where one firm attracts all consumers. A optimistic firm assumes that its market share will be $1$ (and ignores the case where it loses all the market share). There is therefore no stable situation.

\subsection{Neutral firms}
Suppose now that firms are neutral. They assume that every market equilibrium can emerge with the same probability after a deviation of their location from the equilibrium, and compute their excepted payoff. 

\begin{proposition}\label{pro:NEneutral}
When firms are neutral:\\
There is no NE for $a>\frac12$.\\
There exists a unique NE for $a \leq \frac12$, where $(x_1^*,x_2^*)=(\frac12,\frac12)$ and $s_1=s_2=\frac12$.
\end{proposition}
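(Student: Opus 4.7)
The plan is a four-step argument. First I show that in any Nash equilibrium both firms' expected payoffs must equal $1/2$: if firm $1$ deviates to $x_1' = x_2$, the pair $(x_2,x_2)$ admits the equilibria $(i)$ and $(v)$ together with exactly one of $(ii)$ or $(iv)$ (or both plus $(iii)$ when $x_2 = 1/2$), and in every case the corresponding values of $s_1$ average to $1/2$. So ``matching'' guarantees expected share $1/2$, and combined with $E[s_1] + E[s_2] = 1$ any Nash equilibrium must deliver $E[s_1] = E[s_2] = 1/2$.

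Second, I characterise the configurations producing expected payoff $1/2$. Substituting into the formulas of Proposition \ref{pro:vote}, this condition yields $x_1 + x_2 = 1$ in Case A and in Case B whenever all five market equilibria exist, and $x_1 = x_2$ in the two three-equilibria sub-cases of Case B (since the $(iii)$ contribution vanishes and only one of the $\pm(x_2-x_1)/(2a)$ terms remains). So every Nash candidate lies on the diagonal $\{x_1 = x_2\}$ or on the antidiagonal $\{x_1 + x_2 = 1\}$.

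Third, I eliminate every non-central candidate with a single explicit deviation. Given any fixed $x_1 < 1/2$, I claim firm $2$ profits by moving to $x_2' = a + (1-2a)x_1$. Direct substitution shows that $(x_1, x_2')$ lies in Case B (since $x_2' - x_1 = a(1-2x_1) \leq a$), that both the $(ii)$ and $(iv)$ existence conditions hold (with equality on $(iv)$), and therefore all five market equilibria exist. Plugging into the Case B1 share formula and using $x_1 + x_2' - a = 2x_1(1-a)$ collapses everything to expected share $(3-x_1)/5$ for firm $2$, which exceeds $1/2$ exactly when $x_1 < 1/2$. This rules out the diagonal $(x,x)$ with $x < 1/2$ and the antidiagonal $(x_1, 1-x_1)$ with $x_1 < 1/2$; the mirror deviation handles their $x > 1/2$ counterparts. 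Only $(1/2, 1/2)$ remains.

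Finally I test stability of $(1/2, 1/2)$. For a deviation $x_1' < 1/2$, the decisive question is whether equilibrium $(ii)$ still exists at $(x_1', 1/2)$, which by Proposition \ref{pro:vote} holds iff $(1-2a)x_1' \leq 1/2 - a$. For $a \leq 1/2$ this reduces to $x_1' \leq 1/2$, so all five equilibria persist and the resulting expected share $\tfrac{2}{5} + \tfrac{x_1' + 1/2 - a}{10(1-a)}$ stays $\leq 1/2$. For $a > 1/2$ the inequality reverses, $(ii)$ disappears, and only $(i),(iv),(v)$ remain, giving expected share $\tfrac{1}{2} + \tfrac{1/2 - x_1'}{6a} > 1/2$, so firm $1$ strictly profits. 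The main technical obstacle throughout is to carefully track which of the five market equilibria exist at each deviation point; the proof ultimately pivots on the single observation that the sign of $1-2a$ flips at $a = 1/2$, and that this flip is exactly what determines whether equilibrium $(ii)$ survives a deviation of firm $1$ away from $1/2$.
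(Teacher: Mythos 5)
Your overall strategy is the same family of argument as the paper's (a deviation that secures an average share of $\tfrac12$, forcing equilibrium shares $(\tfrac12,\tfrac12)$; then elimination of all location pairs except $(\tfrac12,\tfrac12)$; then the threshold $a=\tfrac12$ read off from which market equilibria survive a deviation from the centre), but your elimination step contains a genuine gap. The deviation $x_2'=a+(1-2a)x_1$ is chosen exactly on the degenerate boundary where the type-$(ii)$ and type-$(iii)$ outcomes of Proposition \ref{pro:vote} coincide: there $s_1^{(ii)}=s_1^{(iii)}=x_1$ and $s_1^{(iv)}=1-x_1$, so the set of market equilibria at $(x_1,x_2')$ is $\{(0,1),(x_1,1-x_1),(1-x_1,x_1),(1,0)\}$ --- four outcomes, not five (and only two when $x_1=0$). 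A neutral firm averages over \emph{market equilibria}, i.e.\ over distinct share vectors, so firm $2$'s deviation payoff is $\frac{1+(1-x_1)+x_1+0}{4}=\tfrac12$ exactly; your figure $\frac{3-x_1}{5}>\tfrac12$ comes from counting the coincident outcome twice (and from treating $(ii)$ as existing even though $s_1=x_1\notin[0,x_1)$). As written, the deviation is not strictly profitable and no non-central candidate is eliminated. The flaw is repairable: take any $x_2''\in\bigl(a+(1-2a)x_1,\ \min\{x_1+a,\,1-x_1\}\bigr)$, which is nonempty whenever $0<x_1<\tfrac12$; there all five market equilibria exist and are distinct, and firm $2$'s average is $\frac{3-\frac{x_1+x_2''-a}{2(1-a)}}{5}>\tfrac12$ precisely because $x_1+x_2''<1$. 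The corner $x_1=0$ needs a separate deviation (e.g.\ $x_2''=a+\varepsilon$ gives the unique market equilibrium with firm $2$'s share $1-\frac{\varepsilon}{2(1-a)}$), or you can simply use the paper's deviation to $1-x_2+\epsilon$, which stays strictly inside the non-degenerate region. Also note the minor slip that the equality in your existence check is on $(ii)$, not $(iv)$.

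Two smaller points. First, in steps 1--2 you treat the payoff at the candidate profile as an \emph{expected} share, but by the paper's definition the equilibrium payoff is the realized market share $s_1$, and the averaging applies only to deviations; the correct chain is: the securing deviation forces realized $s_1=s_2=\tfrac12$, hence $(\tfrac12,\tfrac12)$ must be a market equilibrium at $(x_1,x_2)$, which by Proposition \ref{pro:vote} again forces $x_1+x_2=1$ or $x_1=x_2$, so your candidate set survives, but not for the reason you give. (Your step 1 also shares with the paper the cosmetic issue that the securing deviation is unavailable when firm $1$ already sits at the matching point.) Second, in step 4 you assert that for $a\leq\tfrac12$ ``all five equilibria persist'' for every $x_1'<\tfrac12$, which is false when $x_1'<\tfrac12-a$ (then $x_2-x_1'>a$ and the market equilibrium is unique); the conclusion still holds since that unique share $\frac{x_1'+1/2-a}{2(1-a)}<\tfrac12$, but the case should be stated. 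The $a>\tfrac12$ instability computation is correct and matches the paper's.
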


\begin{proof}
Suppose that an NE exists and denote $(x_1,x_2,s_1,s_2)$ the equilibrium locations and markets shares. By deviating to $1-x_2$, firm $1$'s deviation payoff is $\frac12$, either because it is the unique market equilibrium (when $x_2-x_1>a)$ or because market equilibrium $(iii)$, and thus all five type of market equilibria exist (when $x_2-x_1 \leq a$) and the payoffs in market equilibria of type $(i)$ and $(v)$ sum to $1$, as well as market equilibria of type $(ii)$ and $(iv)$ sum to 1. A symmetric argument applies to firm $2$. So in an NE, we have that $s_1=s_2=\frac12$.

Now, assume by contradiction that $x_2>\frac12$. 
By choosing $x_1=1-x_2+\epsilon$, with $\epsilon>0$ small enough, firm 1 obtains a market share strictly larger than $\frac12$: in the case where $x_2-x_1>a$, we have that $\frac{1+\epsilon-a}{2(1-a)}$ increases with $\epsilon$, in the case where $x_2-x_1 \leq a$ we have that market equilibrium $(iii)$ exists and the sum of the five market shares also increases with $\epsilon$. Hence there is no equilibrium in which $x_2>\frac12$. A symmetric argument holds for firm $1$ and we proved that in an NE, $x_1=x_2=\frac12$.

It remains to show that $(\frac12,\frac12,\frac12,\frac12)$ is an NE for $a\leq \frac12$ and not an NE for $a>\frac12$. If $a\leq \frac12$ and $x_2=\frac12$, then market equilibrium $(iii)$ exists for all $x_1\in[0,\frac12]$. But any deviation to $x_1 \in [0,\frac12]$ yields a lower average market share for firm $1$. A symmetric argument holds for a deviation to $[\frac12,1]$

If $a>\frac12$, $x_1=0$ and $x_2=\frac12$, then market equilibrium $(iii)$ does not exist, but market equilibrium $(iv)$ exists. This is a beneficial deviation for firm 1.
\end{proof}

In the case where firms are neutral, they compute their expected payoff among possible market equilibria. A key result is that payoffs in market equilibria of type $(i)$ and $(v)$ sum to $1$, as well as market equilibria of type $(ii)$ and $(iv)$. Therefore, only the firm of type $(iii)$ varies with location, and it has the crucial property to be increasing with $x_1$ and $x_2$. Therefore, firms want to locate as close as possible to each other.

As an illustration, consider the case where $a \rightarrow 0$. We then have that all five market equilibria exist and, similarly to the case $a=0$ where the principle of minimal differentiation holds, firms minimize their differentiation by selecting the same location at equilibrium. Our proposition proves that this result is robust to the introduction of a popularity effect as long as its magnitude, captured by the parameter $a$, is smaller than $\frac12$. 

On the other hand, when $a$ is larger than $\frac12$, our finding mitigates the principle of minimum differentiation. Indeed, we find that, against $x_2=\frac12$, it is better for firm $1$ to deviate from $x_1=\frac12$ to $x_1'<\frac12$. The intuition is that, if firm $1$ deviates to $x_1'=0$ for example, the possible market equilibria are $s_1=0$, $s_1=\frac12 + \frac{1}{4a}$ or $s_1=1$, whose average is larger than $\frac12$. By selecting what seems to be an unfavorable location (the midpoint between $x_1'$ and $x_2$ is now strictly smaller than $\frac12$), firm $1$ reduces the set of possible market equilibria to $3$ configurations, two of them being in its favor.

\subsection{Pessimistic firms}

We now turn to the case where firms are pessimistic. We find a more complex equilibrium structure.

\begin{proposition}\label{prop:NE}
    A quadruple $(x_1,x_2,s_1,1-s_1)$  is an NE if and only if $(s_1,1-s_1)$ is a market equilibrium for $(x_1,x_2)$ and
\begin{enumerate}
    \item[(i)] $s_1\in\left[\frac{1-x_2-a}{1-a},\frac{x_1}{1-a}\right]$ if $x_1,x_2\leq \frac12$.
    \item[(ii)] $s_1\in[\frac{x_2-a}{1-a},\frac{x_1}{1-a}]$ if $x_1\leq \frac12\leq x_2$.
    \item[(iii)] $s_1\in[\frac{x_2-a}{1-a},\frac{1-x_1}{1-a}]$ if $x_1,x_2\geq \frac12$.
\end{enumerate}
\end{proposition}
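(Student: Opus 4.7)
The plan is to translate the pessimistic Nash--equilibrium condition into inequalities on $s_1$ by computing, for each firm separately, the supremum of the worst-case payoff it can guarantee by deviating. Fix firm $2$'s location at $x_2$ and consider a deviation $x_1'\neq x_1$ by firm $1$. By Proposition~\ref{pro:vote}, the outcome splits into two regimes. If $|x_1'-x_2|\leq a$, then the extreme equilibrium in which firm $2$ captures the entire market (equilibrium $(i)$ or $(v)$, depending on which side of $x_2$ the deviation lies) is always available, so firm $1$'s pessimistic payoff is $0$. If $|x_1'-x_2|>a$, the market equilibrium is unique, and firm $1$'s share is $\frac{x_1'+x_2-a}{2(1-a)}$ when $x_1'<x_2$, and, after relabelling so that Proposition~\ref{pro:vote} applies, $\frac{2-a-x_1'-x_2}{2(1-a)}$ when $x_1'>x_2$.

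Consequently, viewed as a function of $x_1'\in[0,1]$, firm $1$'s pessimistic payoff is piecewise linear: strictly increasing on $[0,x_2-a)$, identically zero on $[x_2-a,x_2+a]\cap[0,1]$, and strictly decreasing on $(x_2+a,1]$. Because of the discontinuities at the two endpoints of the ``zero band,'' the supremum over admissible deviations is approached but not attained, and equals
\[
M_1(x_2):=\max\!\left(\frac{x_2-a}{1-a},\,\frac{1-x_2-a}{1-a},\,0\right),
\]
the $0$ absorbing the degenerate cases $x_2\leq a$ or $x_2\geq 1-a$ in which one of the lateral regions is empty. Since the supremum is not attained, the absence of a profitable deviation for firm $1$ is equivalent to the single inequality $s_1\geq M_1(x_2)$. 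A completely symmetric argument applied to firm $2$ (with $x_2'$ ranging over $[0,1]$, $x_1$ fixed) yields $1-s_1\geq M_2(x_1)$, i.e. $s_1\leq 1-M_2(x_1)$.

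It then remains to simplify $M_1$ and $1-M_2$ according to the signs of $x_1-\tfrac12$ and $x_2-\tfrac12$. When $x_2\leq\tfrac12$ the dominant term of $M_1(x_2)$ is $\tfrac{1-x_2-a}{1-a}$, and when $x_2\geq\tfrac12$ it is $\tfrac{x_2-a}{1-a}$; parallel reasoning shows that $1-M_2(x_1)$ equals $\tfrac{x_1}{1-a}$ when $x_1\leq\tfrac12$ and $\tfrac{1-x_1}{1-a}$ when $x_1\geq\tfrac12$. The four sign combinations collapse into the three cases of the statement, the sub-case $x_1\geq\tfrac12\geq x_2$ being ruled out by the standing convention $x_1\leq x_2$. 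The converse is immediate: if $s_1$ lies in the stated interval and $(s_1,1-s_1)$ is a market equilibrium at $(x_1,x_2)$, then by construction no deviation of either firm can strictly exceed the current payoff.

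The main obstacle I expect is the careful bookkeeping around the jump of the deviation payoff at $|x_1'-x_2|=a$: one must argue that the supremum of the worst-case payoff, although \emph{not} attained, is nevertheless the correct threshold (because a sufficiently close deviation always achieves a payoff arbitrarily near it), and simultaneously check that the edge cases where $x_2-a<0$ or $x_2+a>1$ are transparently handled by the $\max(\,\cdot\,,0)$ in $M_1(x_2)$, and symmetrically for $M_2(x_1)$.
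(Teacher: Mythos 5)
Your proposal is correct and follows essentially the same route as the paper: you compute each firm's pessimistic deviation payoff piecewise (zero inside the band $|x_i'-x_j|\leq a$, the unique-equilibrium share outside, monotone towards the band), take the supremum of the two lateral limits, and translate ``no profitable deviation'' into $s_1\geq M_1(x_2)$ and $1-s_1\geq M_2(x_1)$, which is exactly the content of the paper's Claim~1 and its symmetric counterparts. The only cosmetic difference is that you package the analysis into the functions $M_i$ with a $\max(\cdot,0)$ (and your simplification of the dominant term silently uses $s_1\in[0,1]$ to discard negative bounds, which is harmless), whereas the paper isolates the statement as a claim for $x_j\leq\frac12$ and invokes symmetry.
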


\begin{proof}
We detail the proof for case $(i)$. The other cases follow similarly. 
A quadruple $(x_1,x_2,s_1,1-s_1)$ is an NE if and only if $(s_1,1-s_1)$ is a market equilibrium for $(x_1,x_2)$ and no firm has a beneficial deviation. We conclude using the following claim, whose proof is postponed to the appendix:\\
Claim 1: Let $x_j\leq \frac12$, and $i\neq j$ be the other firm. The three following statements are equivalent: (1) Candidate $i$ has no beneficial deviation, (2) If $x_j+a \leq 1$, then candidate $i$ does not benefit from deviating to $x_j+a$, (3) $s_i \geq 1-\frac{x_j}{1-a}$.\\
We conclude from Claim 1 that firm 1 has no beneficial deviation if and only if  $s_1\geq 1-\frac{x_2}{1-a}=\frac{1-x_2-a}{1-a}$ and that firm 2 has no beneficial deviation if and only if  $s_2=1-s_1\geq 1-\frac{x_1}{1-a} \Leftrightarrow s_1\leq \frac{x_1}{1-a}$.
\end{proof}

To characterize explicitly the set of NE in the case of pessimistic firms, we need to find a quadruplet $(x_1,x_2,s_1,s_2)$ that satisfies conditions from Proposition \ref{pro:vote} and Proposition \ref{prop:NE} simultaneously. Proposition \ref{pro:ex_sym} and Proposition\ref{pro:ex_frac12} illustrate this characterization respectively in the case where locations are symmetric and where $a=\frac12$.\\

First note that the case where firms are pessimistic differs significantly from the case where firms are neutral. In the later case, we proved that the unique equilibrium is completely symmetric $x_1=x_2$ and $s_1=s_2=\frac12$. In the case where firms are pessimistic, we find that the minimal differentiation principle does not hold in general, and that there are equilibria with $x_1 \neq x_2$. However, the distance between the two locations is bounded above by the parameter $a$, therefore, we still have that $x_2 \rightarrow x_1$ when $a \rightarrow 0$ at equilibrium. Similarly, the asymmetry between the market shares is bounded above by a quantity that converges to $0$ when $a \rightarrow 0$.

\begin{proposition}
If $(x_1,x_2,s_1,s_2)$ is an NE, then
$$|x_2-x_1| \leq a,$$
and 
$$|s_2-s_1|\leq \frac{a}{1-a}.$$ 
\end{proposition}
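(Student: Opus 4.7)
The plan is to derive both bounds directly from Proposition~\ref{prop:NE}, invoking Proposition~\ref{pro:vote} for the first inequality. Since $x_1 \leq x_2$ by convention, $|x_2-x_1|=x_2-x_1$ throughout.

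For $x_2 - x_1 \leq a$, I argue by contradiction. Suppose $x_2 - x_1 > a$. Then by Proposition~\ref{pro:vote} there is a unique market equilibrium with $s_1 = \frac{x_1+x_2-a}{2(1-a)}$. In cases (i) and (ii) of Proposition~\ref{prop:NE}, the required upper bound $s_1 \leq \frac{x_1}{1-a}$ rearranges to $x_1 + x_2 - a \leq 2x_1$, i.e., $x_2 - x_1 \leq a$, contradicting the hypothesis. In case (iii), the required lower bound $s_1 \geq \frac{x_2-a}{1-a}$ rearranges to $x_1 + x_2 - a \geq 2(x_2-a)$, again yielding $x_2 - x_1 \leq a$.

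For $|s_2-s_1| \leq \frac{a}{1-a}$, I use $s_2 = 1-s_1$, so that $s_1 - s_2 = 2s_1 - 1$, and evaluate at the endpoints of the $s_1$-interval from Proposition~\ref{prop:NE}. In case (i), where $x_1,x_2 \leq \frac12$, the upper bound gives $s_1 - s_2 \leq \frac{2x_1 - 1 + a}{1-a} \leq \frac{a}{1-a}$ (since $2x_1 \leq 1$), and the lower bound gives $s_2 - s_1 \leq \frac{2x_2 - 1 + a}{1-a} \leq \frac{a}{1-a}$ (since $2x_2 \leq 1$). Case (ii) uses the upper bound $s_1 \leq \frac{x_1}{1-a}$ (with $x_1 \leq \frac12$) for the first direction and the lower bound $s_1 \geq \frac{x_2-a}{1-a}$ (with $x_2 \geq \frac12$) for the second, yielding the same constant $\frac{a}{1-a}$ in each. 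Case (iii) is the mirror image of case (i) under $x \mapsto 1-x$.

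The main obstacle is bookkeeping: identifying, in each case, which endpoint of the $s_1$-interval yields the relevant contradiction for the first bound, and tracking signs for the second so that the case hypothesis on $x_1,x_2$ relative to $\frac12$ forces the relevant numerator to at most $a$. Observing that the upper bound on each $s_i$ in Proposition~\ref{prop:NE} is proportional to the distance from $x_i$ to its nearest endpoint of $[0,1]$ makes the structure of the three cases transparent and explains the symmetry.
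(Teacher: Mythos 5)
Your proof is correct, and it takes a genuinely more direct route than the paper's, especially for the second inequality. For the bound $|x_2-x_1|\leq a$, the paper argues via an explicit profitable deviation: if $x_2-x_1>a$, the unique market equilibrium share $\frac{x_1+x_2-a}{2(1-a)}$ increases in $x_1$, so firm $1$ gains by moving right; you instead feed that same unique market share into the interval conditions of Proposition~\ref{prop:NE} and read off the contradiction $x_2-x_1\leq a$ from the appropriate endpoint in each of the three cases. These are equivalent in substance (both rest on Proposition~\ref{pro:vote} and on the no-deviation conditions that Proposition~\ref{prop:NE} encapsulates), but yours is shorter once Proposition~\ref{prop:NE} is available. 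The real divergence is in the second bound: the paper proves $s_1\geq\frac{1-2a}{2(1-a)}$ by setting up $15$ constrained minimization problems (three NE cases times five market-equilibrium types) and solving the binding ones, whereas you observe that the NE interval for $s_1$ in Proposition~\ref{prop:NE}, combined only with the case hypothesis on the positions of $x_1,x_2$ relative to $\frac12$, already forces $|2s_1-1|\leq\frac{a}{1-a}$ at each endpoint — e.g.\ in case (i), $2s_1-1\leq\frac{2x_1-1+a}{1-a}\leq\frac{a}{1-a}$ and $1-2s_1\leq\frac{2x_2-1+a}{1-a}\leq\frac{a}{1-a}$ — with the other cases checking out the same way. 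This removes any dependence on which of the five market-equilibrium types realizes $s_1$ and eliminates the case explosion; what the paper's optimization approach buys in exchange is the identification of the extremal equilibria (the exact locations and shares attaining $s_1=\frac{1-2a}{2(1-a)}$), which shows the bound is tight and is reused in the welfare analysis, whereas your argument establishes the inequality without exhibiting tightness.
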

\begin{proof}
\begin{itemize}
\item We first prove that $|x_2-x_1| \leq a$.  Suppose that $x_2-x_1>a$. Then by Proposition \ref{pro:vote}, there is a unique market equilibrium with $s_1=\frac{x_1+x_2-a}{2(1-a)}$. Since this $s_1$ increases in $x_1$, it is a profitable deviation for candidate 1 to move to the right, as long as the constraint $x_2-x_1 >a$ still holds.
   
\item We now prove that $s_1\geq \frac{1-2a}{2(1-a)}$ at equilibrium. Because a symmetric inequality holds for $s_2$ we obtain $|s_2-s_1|\leq \frac{a}{1-a}$.\\
Observe that the result is trivially true for $a\geq\frac12$. So we can assume that $a<\frac12$. Given that there are three different cases in Proposition \ref{prop:NE} and five different market equilibria (Proposition \ref{pro:vote}), we have to consider 15 cases. We illustrate the two equilibria that yield the lowest market shares. The other 13 cases can be analyzed similarly.

    Case 1. Assume that $x_1\leq x_2\leq \frac12$ and consider market equilibrium $(ii)$. Then we want to minimize the market share of firm 1 while guaranteeing that $(x_1,x_2)$ and $(s_1,1-s_1)$ defined by market equilibrium $(ii)$ is an NE, that is, we want to solve
    $$\min\left\{s_1\mid 0\leq \frac{1}{2}-\frac{x_2-x_1}{2a}\leq x_1\leq x_2\leq\frac12, x_2-x_1\leq a, \frac{1-x_2-a}{1-a}\leq  \frac{1}{2}-\frac{x_2-x_1}{2a}\leq\frac{x_1}{1-a}\right\}.$$
    Solving the above optimization problem yields $s_1=\frac{1-2a}{2(1-a)}$ for $x_1=\frac{(1+a)\cdot(1-2a)}{2(1-a)}$ and $x_2=\frac12$.

    Case 2. Assume that $x_1\leq x_2\leq \frac12$ and consider market equilibrium $(iii)$. Then we want to minimize the market share of firm 1 while guaranteeing that $(x_1,x_2)$ and $(s_1,1-s_1)$ defined by market equilibrium $(iii)$ is an NE, that is, we want to solve
    $$\min\left\{s_1\mid 0\leq x_1\leq \frac{x_1+x_2-a}{2(1-a)}\leq x_2\leq\frac12, x_2-x_1\leq a, \frac{1-x_2-a}{1-a}\leq  \frac{x_1+x_2-a}{2(1-a)}\leq\frac{x_1}{1-a}\right\}.$$
    Solving the above optimization problem yields $s_1=\frac{1-2a}{2(1-a)}$ for $x_1=\frac{1-2a}{2}$ and $x_2=\frac12$.\end{itemize}\end{proof}

\subsubsection{First illustrative example: the case of symmetric equilibria}

We now illustrate Proposition \ref{prop:NE} to explicitly characterize the set of NE in which $x_2=1-x_1$. 

\begin{proposition}\label{pro:ex_sym}
A quadruple $(x_1,1-x_1,s_1,1-s_1)$ is an NE if and only if 
\begin{enumerate}
\item $\frac{1-a}{2}\leq x_1\leq \frac{1-a^2}{2}$, and $s_1=\frac12$.
\item $\frac{1-a^2}{2}\leq x_1 \leq\min\{1-a,\frac{1}{2}\}$, and $s_1=\frac{1}{2}$, $s_1=\frac{1}{2}-\frac{1-2x_1}{2a}$ or $s_1=\frac{1}{2}+\frac{1-2x_1}{2a}$. 
\item $1-a\leq x_1 \leq\frac{1}{2}$, and $s_1=\frac{1}{2}$, $s_1=\frac{1}{2}-\frac{1-2x_1}{2a}$, $s_1=\frac{1}{2}+\frac{1-2x_1}{2a}$, $s_1=0$ or $s_1=1$. 
\end{enumerate} 
\end{proposition}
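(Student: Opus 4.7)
The plan is to combine the market-equilibrium characterization of Proposition \ref{pro:vote} with the NE band of Proposition \ref{prop:NE}, specialized to $x_2=1-x_1$. Since $x_1\le x_2$ forces $x_1\le\tfrac12\le x_2$, we are in case (ii) of Proposition \ref{prop:NE}, so the NE condition reduces to
$$s_1\in\left[\tfrac{1-x_1-a}{1-a},\;\tfrac{x_1}{1-a}\right].$$
Note also that $x_2-x_1=1-2x_1$, so the condition $x_2-x_1\le a$ in Proposition \ref{pro:vote} becomes $x_1\ge\tfrac{1-a}{2}$. I would first dispose of the regime $x_1<\tfrac{1-a}{2}$: here the unique market equilibrium has $s_1=\tfrac{x_1+x_2-a}{2(1-a)}=\tfrac12$, but $\tfrac{1-x_1-a}{1-a}>\tfrac12$ precisely when $x_1<\tfrac{1-a}{2}$, so the NE band does not contain $\tfrac12$ and no NE exists.

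Next, assume $x_1\ge\tfrac{1-a}{2}$, so all equilibria of types (i)--(v) from Proposition \ref{pro:vote} are candidates, with (ii) and (iv) specialized to $s_1=\tfrac12\mp\tfrac{1-2x_1}{2a}$ and (iii) specialized to $s_1=\tfrac12$. I would then test each market equilibrium against the NE band, which is the main computational step but is routine:
\begin{itemize}
\item For type (iii), $s_1=\tfrac12$ lies in the band iff $x_1\ge\tfrac{1-a}{2}$, which is automatic.
\item For type (ii), the lower-band inequality $\tfrac12-\tfrac{1-2x_1}{2a}\ge\tfrac{1-x_1-a}{1-a}$ simplifies (clearing the denominator $2a(1-a)$ and collecting) to $2x_1\ge 1-a^2$; the upper-band inequality is shown to be slack on $[0,\tfrac12]$.
\item For type (iv), by the same manipulation, the upper-band inequality gives exactly $x_1\ge\tfrac{1-a^2}{2}$, while the lower-band inequality is automatic because $s_1\ge\tfrac12$ and $\tfrac12\ge\tfrac{1-x_1-a}{1-a}$ in the range considered.
\item For types (i) and (v), $s_1\in\{0,1\}$ lies in the band iff $\tfrac{1-x_1-a}{1-a}\le 0$ and $\tfrac{x_1}{1-a}\ge 1$, i.e.\ iff $x_1\ge 1-a$.
\end{itemize}

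Finally I would assemble the thresholds. Using $\tfrac{1-a}{2}\le\tfrac{1-a^2}{2}\le 1-a$ (the last two inequalities being equivalent to $a\le 1$) and the ordering $1-a\lessgtr\tfrac12$ according to $a\gtrless\tfrac12$, the interval $[\tfrac{1-a}{2},\tfrac12]$ splits into the three subintervals stated in the proposition: on $[\tfrac{1-a}{2},\tfrac{1-a^2}{2}]$ only (iii) survives; on $[\tfrac{1-a^2}{2},\min\{1-a,\tfrac12\}]$ the equilibria (ii), (iii), (iv) all qualify; and on $[1-a,\tfrac12]$ (nonempty only when $a\ge\tfrac12$) all five qualify. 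The main obstacle is purely bookkeeping: carrying out the four inequality reductions above cleanly enough that the three thresholds $\tfrac{1-a}{2}$, $\tfrac{1-a^2}{2}$, $1-a$ emerge transparently and match the cases in the statement.
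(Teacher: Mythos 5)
Your proposal is correct and follows essentially the same route as the paper: specialize the band of Proposition \ref{prop:NE} (case (ii), since $x_1\leq\frac12\leq 1-x_1$) to $x_2=1-x_1$, test each of the five market-equilibrium types of Proposition \ref{pro:vote} against it, and assemble the thresholds $\frac{1-a}{2}$, $\frac{1-a^2}{2}$, $1-a$. The only point the paper spells out that you take for granted is that each candidate $s_1$ is indeed a market equilibrium on the relevant range (the existence conditions of Proposition \ref{pro:vote} reduce to $x_1\leq\frac12$ in the symmetric case, so the claim is true, but it should be stated).
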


Figure \ref{fig:sym} below illustrates the $3$ different regions. In region $1$, equilibrium locations are relatively distant from each other but market shares are equal. In region $2$, the distance between locations is smaller and market shares can be asymmetric, but both firms have a strictly positive clientele. In region $3$, the distance between locations is arbitrarily small, a situation where one firm covers the entire market is possible.

\begin{figure}[h]
\centering
\includegraphics[scale=0.7]{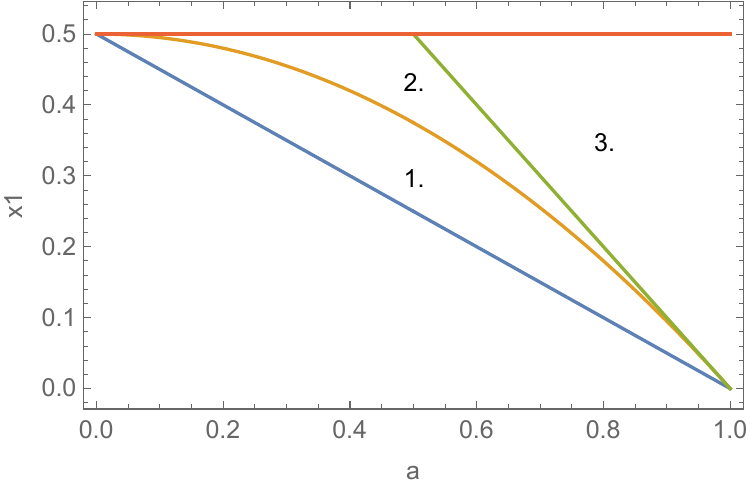}
\caption{\label{fig:sym} Possible symmetric equilibria $(x_1,1-x_1)$ with respect to $a$.}
\end{figure}

\begin{proof}
 We check the conditions of Proposition \ref{prop:NE} for the five different market equilibria given in Proposition \ref{pro:vote}. 

\begin{enumerate}
\item[(i)] Observe that $0\in[\frac{1-x_1-a}{1-a},\frac{x_1}{1-a}]$ is equivalent to $x_1\geq 1-a$. By Theorem \ref{prop:NE}, $s_1=0$ is an NE if and only if $s_1=0$ is a market equilibrium and $x_1 \geq 1-a$. Remains to prove that $s_1=0$ is a market equilibrium. Because $1-x_1-x_1\leq 2a-1\leq a$, by Proposition \ref{pro:vote} $s_1=0$ is a market equilibrium.
\item[(ii)] Observe that $\frac{1}{2}-\frac{1-2x_1}{2a}\in[\frac{1-x_1-a}{1-a},\frac{x_1}{1-a}]$ is equivalent to $x_1 \geq \frac{1-a^2}{2}$. By Theorem \ref{prop:NE}, $s_1=\frac{1}{2}-\frac{1-2x_1}{2a}$ is an NE if and only if $s_1=\frac{1}{2}-\frac{1-2x_1}{2a}$ is a market equilibrium and $x_1 \geq \frac{1-a^2}{2}$. Remains to prove that $s_1=\frac{1}{2}-\frac{1-2x_1}{2a}$ is a market equilibrium. Because $1-x_1-x_1\leq 1-2\cdot\frac{1-a^2}{2}=a^2<a$ and $1-x_1-(1-2a)x_1=1-2(1-a)x_1\geq a$, by Proposition \ref{pro:vote} $s_1=\frac{1}{2}-\frac{1-2x_1}{2a}$ is a market equilibrium. 

\item[(iii)] Observe that $\frac12\in[\frac{1-x_1-a}{1-a},\frac{x_1}{1-a}]$ is equivalent to $x_1 \geq \frac{1-a}{2}$. By Theorem \ref{prop:NE}, $s_1=\frac12$ is an NE if and only if $s_1=\frac{1}{2}$ is a market equilibrium and $x_1 \geq \frac{1-a}{2}$. Remains to prove that $s_1=\frac{1}{2}$ is a market equilibrium. Because $1-x_1-x_1\leq 1-2\cdot\frac{1-a}{2}=a$ and $x_1-(1-2a)(1-x_1)=2a-1+2x_1(1-a)\leq a\leq 1-2(1-a)x_1=1-x_1-(1-2a)x_1$, by Proposition \ref{pro:vote} $s_1=\frac{1}{2}$ is a market equilibrium.

\item[(iv)] Observe that $\frac{1}{2}+\frac{1-2x_1}{2a}\in[\frac{1-x_1-a}{1-a},\frac{x_1}{1-a}]$ is equivalent to $x_1 \geq \frac{1-a^2}{2}$. By Theorem \ref{prop:NE}, $s_1=\frac{1}{2}+\frac{1-2x_1}{2a}$ is an NE if and only if $s_1=\frac{1}{2}+\frac{1-2x_1}{2a}$ is a market equilibrium and $x_1 \geq \frac{1-a^2}{2}$. Remains to prove that $s_1=\frac{1}{2}+\frac{1-2x_1}{2a}$ is a market equilibrium. Because $1-x_1-x_1\leq 1-2\cdot\frac{1-a^2}{2}=a^2<a$ and $x_1-(1-2a)(1-x_1)=2a-1+2x_1(1-a)\leq a$, by Proposition \ref{pro:vote} $s_1=\frac{1}{2}+\frac{1-2x_1}{2a}$ is a market equilibrium. 

\item[(v)] Observe that $1\in[\frac{1-x_1-a}{1-a},\frac{x_1}{1-a}]$ is equivalent to $x_1\geq 1-a$. By Theorem \ref{prop:NE}, $s_1=1$ is an NE if and only if $s_1=1$ is a market equilibrium and $x_1 \geq 1-a$. Remains to prove that $s_1=1$ is a market equilibrium. Because $1-x_1-x_1\leq 2a-1\leq a$, by Proposition \ref{pro:vote} $s_1=1$ is a market equilibrium.
\end{enumerate}
Summarizing the above NE yields the result.
\end{proof}

\subsubsection{Second illustrative example: the case where $a=\frac12$.}

We illustrate Proposition \ref{prop:NE} in the case where $a=\frac12$. 

\begin{proposition}\label{pro:ex_frac12}
If $a=\frac12$, the pair $(x_1,x_2)$ is an NE for the corresponding market equilibrium type if
\begin{enumerate}
    \item[(i)] $0\leq x_1\leq \frac12$ and $x_2=\frac12$.
    \item[(ii)] $0\leq x_1\leq \frac12$ and $\frac12\leq x_2\leq \frac{3+2x_1}{6}$, and $\frac12\leq x_1\leq \frac34$ and $x_1\leq x_2\leq \frac{3+2x_1}{6}$.
    \item[(iii)] $0\leq x_1\leq \frac12$ and $\frac12\leq x_2\leq \frac{1+2x_1}{2}$.
    \item[(iv)] $\frac14\leq x_1\leq \frac12$ and $x_1\leq x_2\leq \frac{-1+6x_1}{2}$.
    \item[(v)] $x_1=\frac12$ and $\frac12\leq x_2\leq 1$.
\end{enumerate}
\end{proposition}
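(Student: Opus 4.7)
The plan is to directly combine the market-equilibrium characterization of Proposition \ref{pro:vote} with the NE condition of Proposition \ref{prop:NE}, specializing both to $a = \tfrac12$. First I would rewrite the NE intervals from Proposition \ref{prop:NE} at $a = \tfrac12$: they become $s_1 \in [1-2x_2,\,2x_1]$ when $x_1,x_2 \leq \tfrac12$; $s_1 \in [2x_2-1,\,2x_1]$ when $x_1 \leq \tfrac12 \leq x_2$; and $s_1 \in [2x_2-1,\,2-2x_1]$ when $x_1,x_2 \geq \tfrac12$. The five candidate market-share values of Proposition \ref{pro:vote} similarly simplify to $s_1 = 0$, $s_1 = \tfrac12 - (x_2-x_1)$, $s_1 = x_1+x_2-\tfrac12$, $s_1 = \tfrac12 + (x_2-x_1)$, and $s_1 = 1$ respectively, and the market-equilibrium existence bounds $a \leq x_2 - (1-2a)x_1$ and $x_1 - (1-2a)x_2 \leq a$ collapse to $x_2 \geq \tfrac12$ and $x_1 \leq \tfrac12$.

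For each of the five market-equilibrium types, I would then plug the corresponding $s_1$ into each of the three NE intervals above and solve the resulting linear inequalities in $x_1, x_2$, keeping the separate cases on which side of $\tfrac12$ the locations lie. For instance, for type (iii) with $x_1 \leq \tfrac12 \leq x_2$, the bounds $x_1+x_2-\tfrac12 \geq 2x_2-1$ and $x_1+x_2-\tfrac12 \leq 2x_1$ both collapse to the single condition $x_2 \leq \tfrac{1+2x_1}{2}$, giving statement (iii). For type (ii) the firm-2 bound $\tfrac12-(x_2-x_1) \leq 2x_1$ is automatic while the firm-1 bound yields $x_2 \leq \tfrac{3+2x_1}{6}$; taking the union over the three location regimes and intersecting with the market-equilibrium requirement $x_2 \geq \tfrac12$ produces the piecewise description in statement (ii). Types (i) and (v) are degenerate, forcing $x_2 = \tfrac12$ and $x_1 = \tfrac12$ respectively, and type (iv) is handled symmetrically to type (ii), where the firm-1 bound is automatic and the firm-2 bound yields $x_2 \leq \tfrac{-1+6x_1}{2}$, which in turn forces $x_1 \geq \tfrac14$.

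The main obstacle is not conceptual but bookkeeping: one must enumerate $5 \times 3 = 15$ subcases, and in each identify which subregion is vacuous, which degenerates to a lower-dimensional boundary, and which yields a genuine two-dimensional NE region. The most delicate cases are (ii) and (iv), where the distinction between $x_2$ being below or above $\tfrac12$ (respectively $x_1$) changes the active inequality from Proposition \ref{prop:NE}, so the resulting region is piecewise-linear. A secondary issue is that Proposition \ref{pro:vote} writes $s_1 \in [0,x_1)$ and $s_1 \in (x_2,1]$ with half-open intervals for types (ii) and (iv); I would close these by continuity, consistent with the closed inequalities appearing in the proposition's statement.
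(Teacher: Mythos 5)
Your proposal follows essentially the same route as the paper's appendix proof: specialize the NE intervals of Proposition \ref{prop:NE} and the five market-equilibrium shares of Proposition \ref{pro:vote} to $a=\tfrac12$, then enumerate the $5\times 3=15$ cases and solve the resulting linear inequalities, merging the regions across the three location regimes. The specific reductions you cite (e.g.\ both bounds collapsing to $x_2\leq\tfrac{1+2x_1}{2}$ for type (iii), and the bounds $x_2\leq\tfrac{3+2x_1}{6}$ and $x_2\leq\tfrac{-1+6x_1}{2}$ for types (ii) and (iv)) agree with the paper's computations, so the plan is correct.
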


\begin{proof}
These NE are found following a case by case analysis that is provided in the appendix (Section \ref{se:appendix}).  
\end{proof}

Figure \ref{fig:a12} illustrates the above set of NE. Observe that the set of NE is non-convex even for fixed $x_1$.

\begin{figure}[h]
\centering
\begin{tikzpicture}[scale=0.5]
\begin{scope}[blend mode=multiply]
\draw[blue!50,fill=blue!50] (0,5)--(5,5)--(5,10);
\draw[red!50,fill=red!50] (0,5)--(5,5)--(7.5,7.5);
\draw[green!50,fill=green!50] (2.5,2.5)--(5,5)--(5,10);
\end{scope}
\draw[thick] (0,5) to (5,5);
\draw[thick] (5,5) to (5,10);
\draw[-stealth] (0,0) node[below left]{0} to (10.5,0) node[right]{$x_1$};
\draw[-stealth] (0,0) to (0,10.5) node[above]{$x_2$};
\draw[dashed] (0,0) to (10,10);
\draw[|-|] (2.5,0) node[below] {$\frac14$} to  (5,0) node[below] {$\frac12$};
\draw[|-|] (7.5,0) node[below] {$\frac34$} to (10,0) node[below] {1};
\draw[|-|] (0,2.5) node[left] {$\frac14$} to (0,5) node[left] {$\frac12$};
\draw[|-|] (0,7.5) node[left] {$\frac34$} to (0,10) node[left] {1};
\node at (2,4.5) {$(i)$};
\node at (5.75,6.5) {$(ii)$};
\node at (3,7) {$(iii)$};
\node at (3.75,4.5) {$(iv)$};
\node at (5.75,8) {$(v)$};
\end{tikzpicture}\caption{Set of equilibria for $a=\frac12$.}\label{fig:a12}
\end{figure}
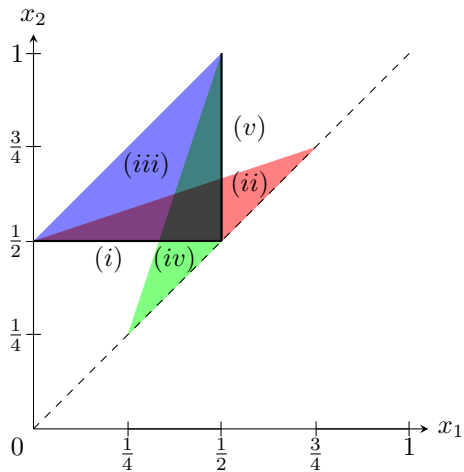

\section{Welfare analysis}

In this section, we investigate the welfare impact of the popularity effect. Because the total payoff of firms is constantly equal to $1$, we exclude them from the welfare analysis and restrict our attention to the consumers' side.\\
~~\\
The comparative static on $a$ of the consumers' welfare is not trivial. Indeed, for fixed locations $(x_1,x_2)$, consumers' utility is increasing with $a$, but this parameter also plays a role when firms select their locations $(x_1,x_2)$ at equilibrium. Parameter $a$ has an impact on the distance between firms and thus the distance between consumers and firms. Both in industrial organisation and in political economy, many papers are concerned with the inefficiency of equilibrium: firms usually do not differentiate enough in comparison with the social optimum. Popularity modifies this perspective: we find that when $a$ is large enough, consumers might not care about the differentiation of firms and could prefer to buy from a popular firm even if the product differs significantly from their ideal. For example, we find that when $a \geq \frac14$, the social optimum is such that all consumers buy from a central firm at $\frac12$ and do not care about the location of the second firm.\\
~~\\
Because our game typically display a multiplicity of equilibrium, we measure the price of anarchy and the price of stability (respectively the ratio between the social cost of the worst / best NE and the social optimum). 

Our main findings are Proposition \ref{pro:poa} and Proposition \ref{pro:pos}, these ratios are non-monotonic with respect to the magnitude of the popularity effect.

\subsection{Social welfare}

The welfare function measures the consumers' surplus, that is the integral of their utilities, for fixed locations $(x_1,x_2)$, market shares $(s_1,1-s_1)$ and for a fixed parameter $\theta \geq 1$. Formally,

\begin{align*}
    W_a(x_1,x_2,s_1,1-s_1,\theta)&=\int_0^{s_1} \left(\theta+ a s_1 - |t-x_1|\right)dt+\int_{s_1}^1 \left(\theta+ a (1-s_1) - |t-x_2|\right)dt\\
    &=\theta+ a s_1^2+a(1-s_1)^2 - \int_0^{s_1}|t-x_1|dt-\int_{s_1}^1 |t-x_2|dt.
\end{align*}

And we define $PoA(a,\theta)$ and $PoS(a,\theta)$ the price of anarchy and the price of stability with parameter $a$ and $\theta$ as follows:

$$PoA(a,\theta)= \frac{\displaystyle\max_{x_1,x_2,s_1} W_a(x_1,x_2,s_1,1-s_1,\theta)}{\displaystyle\min_{(x_1^*,x_2^*,s_1^*) \in NE}  W_a(x_1^*,x_2^*,s_1^*,1-s_1^*,\theta)}$$

$$PoS(a,\theta)= \frac{\displaystyle\max_{x_1,x_2,s_1} W_a(x_1,x_2,s_1,1-s_1,\theta)}{\displaystyle\max_{(x_1^*,x_2^*,s_1^*) \in NE}  W_a(x_1^*,x_2^*,s_1^*,1-s_1^*,\theta)}$$

Our objective is to compare the welfare at equilibrium with the social optimum. The following proposition provides a simple expression for the welfare of the social optimum.
\begin{proposition}\label{pro:opt}{Social optimum}~~\\
If $a< \frac14$, then $x_1=\frac14$, $x_2=\frac34$ and $s_1=\frac12$ is the social optimum. The optimal welfare is then $$W_a\left(\frac14,\frac34,\frac12,\frac12,\theta\right)= \theta - \frac18 + \frac{a}{2}$$
If $a\geq \frac14$, then $x_2=\frac12$ and $s_1=0$, or $x_1=\frac12$ and $s_1=1$ are social optima. The optimal welfare is then $$W_a\left(0,\frac12,0,1,\theta\right)= \theta - \frac14 + a$$
\end{proposition}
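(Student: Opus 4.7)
The welfare $W_a$ decouples nicely once we fix the allocation $s_1$: the three remaining pieces are $a(s_1^2+(1-s_1)^2)$, which depends only on $s_1$, and two transportation integrals $\int_0^{s_1}|t-x_1|\,dt$ and $\int_{s_1}^1|t-x_2|\,dt$, each depending only on one location. I would therefore proceed in two stages, first minimizing the transportation cost given $s_1$, then optimizing over $s_1$.

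\textbf{Stage 1 (optimal locations given $s_1$).} For any $s_1\in(0,1)$, the integral $\int_0^{s_1}|t-x_1|\,dt$ is a standard one-dimensional $L^1$ problem whose minimizer is the median of the uniform distribution on $[0,s_1]$, namely $x_1=s_1/2$, with minimum value $s_1^2/4$. Symmetrically, $\int_{s_1}^1|t-x_2|\,dt$ is minimized at $x_2=(1+s_1)/2$ with minimum $(1-s_1)^2/4$. Note that $s_1/2\le s_1\le(1+s_1)/2$, so the inequality $x_1\le x_2$ (our standing assumption) is automatically satisfied, and both minimizers lie in $[0,1]$. In the degenerate cases $s_1=0$ or $s_1=1$ one of the two integrals vanishes and the corresponding location becomes a free parameter, leaving us to optimize only the single nondegenerate integral $\int_0^1|t-x|\,dt$, which is minimized at $x=1/2$ with value $1/4$.

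\textbf{Stage 2 (optimal $s_1$).} Substituting the optimal locations back yields, for any $s_1\in[0,1]$,
\[
W_a^\star(s_1)=\theta+\left(a-\tfrac14\right)\bigl(s_1^2+(1-s_1)^2\bigr).
\]
The function $g(s_1):=s_1^2+(1-s_1)^2=2s_1^2-2s_1+1$ is strictly convex on $[0,1]$, attains its minimum $1/2$ at $s_1=1/2$, and its maximum $1$ at the two endpoints $s_1\in\{0,1\}$. The optimization therefore splits on the sign of $a-\tfrac14$: when $a<\tfrac14$, the coefficient is negative so we want to minimize $g$, giving $s_1=1/2$, $x_1=1/4$, $x_2=3/4$, and welfare $\theta+(a-\tfrac14)\cdot\tfrac12=\theta-\tfrac18+\tfrac{a}{2}$; when $a\ge\tfrac14$, the coefficient is nonnegative so we maximize $g$, giving $s_1\in\{0,1\}$ with the single active firm located at $1/2$, and welfare $\theta+(a-\tfrac14)=\theta-\tfrac14+a$.

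\textbf{Main obstacle.} There is essentially no hard step; the only thing to be careful about is the boundary behaviour in Stage 1, ensuring that the transport-integral minimizer stays within the prescribed interval so that we genuinely get $s_1^2/4$ and $(1-s_1)^2/4$ (this is automatic since the median lies inside the interval), and handling the degenerate cases $s_1\in\{0,1\}$ separately to justify that the remaining unused location can be chosen freely without changing the optimal welfare value.
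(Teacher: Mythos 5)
Your proof is correct, and it takes a cleaner route than the paper's. The paper's proof fixes the position of the split point relative to the locations and solves two constrained programs jointly in $(x_1,x_2,s_1)$ — one for $x_1\le s_1\le x_2$ (equation \eqref{eq1}) and one for $x_2\le s_1$ (equation \eqref{eq2}) — and simply reports which program wins for $a<\frac14$ versus $a\geq\frac14$, without exhibiting the intermediate computations (and without explicitly treating the mirror case $s_1\le x_1$, which is implicitly covered by symmetry). You instead exploit the separability of $W_a$: for fixed $s_1$ each transportation integral is an $L^1$ location problem solved by the median of the corresponding interval, giving $\int_0^{s_1}|t-x_1|\,dt\geq \frac{s_1^2}{4}$ and $\int_{s_1}^1|t-x_2|\,dt\geq\frac{(1-s_1)^2}{4}$, so the whole problem collapses to maximizing $\theta+\left(a-\frac14\right)\left(s_1^2+(1-s_1)^2\right)$ over $s_1\in[0,1]$, and the threshold $a=\frac14$ emerges transparently as the sign change of the coefficient of $g(s_1)=s_1^2+(1-s_1)^2$. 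This buys you a complete, case-free verification (the median automatically lies in the relevant interval, so the ordering $x_1\le x_2$ and feasibility are never an issue), an explanation of why $\frac14$ is exactly the trade-off between the popularity term $a\,g(s_1)$ and the minimal transport cost $\frac14 g(s_1)$, and a uniform treatment of the degenerate optima $s_1\in\{0,1\}$ where the idle firm's location is free — all facts the paper's statement asserts but its proof leaves to the reader.
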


Interestingly, we find a cut-off value $a=\frac14$: below this threshold, the popularity plays no role. At the social optimum, firms locate as a planner minimizing only the average distance between firms and consumers would select. Above this threshold, the popularity matters first and only one firm covers the demand. This firms maximizes the consumers' surplus by locating at $\frac12$, and the second firm has no impact.
\begin{proof}
    We solve the following two optimization problems:
    \begin{align}\label{eq1}
     \max_{x_1,x_2,s_1} \quad&a s_1^2+a(1-s_1)^2 - \int_0^{x_1}(x_1-t)dt-\int_{x_1}^{s_1}(t-x_1)dt-\int_{s_1}^{x_2}(x_2-t)dt-\int_{x_2}^1(t-x_2)dt\\
    s.t. \quad & 0\leq x_1\leq s_1\leq x_2\leq 1.\nonumber
    \end{align}
    and 
    \begin{align}\label{eq2}
    \max_{x_1,x_2,s_1} \quad&a s_1^2+a(1-s_1)^2 - \int_0^{x_1}(x_1-t)dt-\int_{x_1}^{s_1}(t-x_1)dt-\int_{s_1}^1(t-x_2)dt\\
    s.t. \quad & 0\leq x_1\leq x_2\leq s_1\leq 1.\nonumber
    \end{align}
    If $a< \frac14$, then \eqref{eq1} has a larger objective value with optimal values $x_1^*=\frac14$, $x_2^*=\frac34$ and $s_1^*=\frac12$. If $a=\frac14$, then \eqref{eq1} and \eqref{eq2} have the same objective value. If $a\geq \frac14$, then \eqref{eq2} has a larger objective value with optimal values $x_2^*=\frac12$ and $s_1^*=0$, or $x_1^*=\frac12$ and $s_1^*=1$.
\end{proof}

While we cannot compute the PoA and the PoS for optimistic firms (because there exist no NE), we study seperately the case where firms are neutral or pessimistic.

\subsection{Neutral firms}

In the case where firms are neutral, the game admits a unique NE for $a \leq \frac12$ and no NE for $a>\frac12$. Therefore, $PoA(\theta,a)=PoS(\theta,a)$ for any $a\leq \frac12$. More precisely:

\begin{proposition}
$$PoA(a,\theta)=PoS(a,\theta)=\begin{cases}
\displaystyle\frac{\theta-(\frac18-\frac{a}{2})}{\theta-(\frac14-\frac{a}{2})}& \text{ if }a\leq \frac14,\\
~~\\
\displaystyle\frac{\theta-(\frac14-a)}{\theta-(\frac14-\frac{a}{2})}& \text{ if }\frac14<a\leq \frac12.
\end{cases}$$
\end{proposition}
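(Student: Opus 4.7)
The plan is essentially a direct computation that combines the uniqueness of the equilibrium for neutral firms (Proposition \ref{pro:NEneutral}) with the characterization of the social optimum (Proposition \ref{pro:opt}). First, I would note that since for $a\leq\tfrac12$ the unique Nash equilibrium in the neutral case is $(x_1^*,x_2^*,s_1^*,s_2^*)=(\tfrac12,\tfrac12,\tfrac12,\tfrac12)$, both the minimum and the maximum of $W_a$ over the set of NE coincide with the single value $W_a(\tfrac12,\tfrac12,\tfrac12,\tfrac12,\theta)$. This immediately yields $PoA(a,\theta)=PoS(a,\theta)$ for every $a\leq\tfrac12$, so only one welfare value has to be evaluated.

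Next, I would compute this equilibrium welfare by substituting $x_1=x_2=s_1=\tfrac12$ into the definition of $W_a$. The externality term gives $a s_1^{2}+a(1-s_1)^{2}=\tfrac{a}{2}$, and the two distance integrals both equal $\int_0^{1/2}(\tfrac12-t)\,dt=\tfrac18$. Putting the pieces together yields
\[
W_a\!\left(\tfrac12,\tfrac12,\tfrac12,\tfrac12,\theta\right)=\theta+\tfrac{a}{2}-\tfrac14=\theta-\bigl(\tfrac14-\tfrac{a}{2}\bigr),
\]
which matches the denominator that appears in both cases of the claimed expression.

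Finally, I would invoke Proposition \ref{pro:opt} to read off the optimal welfare. For $a\leq\tfrac14$ it equals $\theta-\tfrac18+\tfrac{a}{2}=\theta-(\tfrac18-\tfrac{a}{2})$, and for $\tfrac14<a\leq\tfrac12$ it equals $\theta-\tfrac14+a=\theta-(\tfrac14-a)$. Dividing each by the NE welfare computed above produces the two branches of the displayed formula.

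There is essentially no hard step: the argument is a plug-in verification relying on two previously established results. The one point that deserves care is checking the sign conventions in the second integral $\int_{1/2}^{1}(t-\tfrac12)\,dt$ (since the absolute value changes at $\tfrac12$) to make sure both halves contribute $+\tfrac18$ rather than canceling; this is the only place where a routine slip could flip a sign and produce a wrong constant in the numerator or denominator.
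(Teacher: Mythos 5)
Your proposal is correct and follows essentially the same route as the paper, which simply combines Proposition \ref{pro:NEneutral} (uniqueness of the neutral-firm NE at $(\tfrac12,\tfrac12,\tfrac12,\tfrac12)$, forcing $PoA=PoS$) with Proposition \ref{pro:opt}; your explicit evaluation $W_a(\tfrac12,\tfrac12,\tfrac12,\tfrac12,\theta)=\theta-(\tfrac14-\tfrac{a}{2})$ and the resulting ratios are accurate.
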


\begin{proof}
The proposition follows directly from Proposition \ref{pro:opt} and Proposition \ref{pro:NEneutral}.
\end{proof}

\begin{figure}[h]
\centering
\includegraphics[scale=0.7]{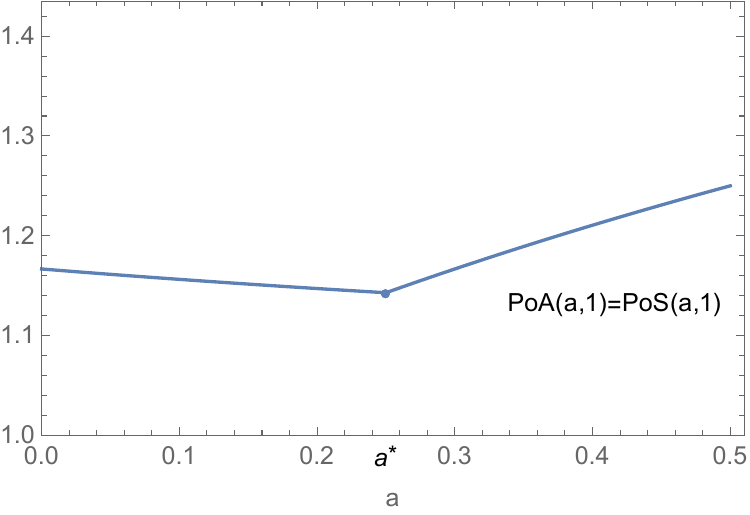}
\caption{\label{fig:poa} Price of anarchy and price of stability for neutral firms as a function of $a$, for $\theta=1$.}
\end{figure}

We find that the PoA and the PoS are decreasing with $a \in (0,\frac14]$ and increasing with $a \in [\frac14,\frac12]$, reaching their minimum for $a=\frac14$. The minimum is therefore equal to $\frac{\theta}{\theta-\frac18}$. For $\theta=1$, we find a minimal loss of efficiency equal to $\frac{8}{7}\simeq 1.14$.

\subsection{Pessimistic firms}

\subsubsection{Price of anarchy}

\begin{proposition} \label{pro:ana}
The NE that minimizes the social welfare is $x_1=x_2=\frac{1-a}{2}$ and $s_1=\frac12$ with a welfare of $\theta-\frac{1+a^2}{4}+\frac{a}{2}=\theta - \frac{(1-a)^2}{4}$.
\end{proposition}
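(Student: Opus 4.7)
My plan is first to verify that the claimed configuration is a Nash equilibrium and compute its welfare directly, then to show by case analysis that no other NE yields a strictly lower welfare.

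For the first step, I would check that $(x_1, x_2, s_1, s_2) = \left(\frac{1-a}{2}, \frac{1-a}{2}, \frac{1}{2}, \frac{1}{2}\right)$ is an NE. Since $x_1 = x_2$, all consumers are equidistant from both firms, and the equal split $s_1 = s_2 = 1/2$ equalizes popularity, so every consumer is indifferent and this is a valid market equilibrium. Applying Proposition \ref{prop:NE}(i) to $x_1 = x_2 = (1-a)/2 \leq \tfrac{1}{2}$, the admissible interval $\left[\frac{1-x_2-a}{1-a}, \frac{x_1}{1-a}\right]$ collapses to the singleton $\{1/2\}$, matching $s_1 = 1/2$. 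A direct computation yields $D = \int_0^1 \left|t - \tfrac{1-a}{2}\right|dt = \frac{(1-a)^2 + (1+a)^2}{8} = \frac{1+a^2}{4}$, while the popularity term is $a(s_1^2+(1-s_1)^2)=a/2$, so $W_a = \theta + \tfrac{a}{2} - \tfrac{1+a^2}{4} = \theta - \tfrac{(1-a)^2}{4}$.

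For the second step, I would show $W_a \geq \theta - \tfrac{(1-a)^2}{4}$ for every NE, equivalently $D - a(s_1^2 + (1-s_1)^2) \leq \tfrac{(1-a)^2}{4}$. The symmetry $(x_1, x_2, s_1) \mapsto (1-x_2, 1-x_1, 1-s_1)$ preserves both the welfare and the NE status, so I can restrict to case (i) of Proposition \ref{prop:NE} (case (iii) follows by symmetry, and case (ii) is self-symmetric). A NE is then pinned down by the joint structure of Proposition \ref{prop:NE} (which fixes a range for $s_1$) and Proposition \ref{pro:vote} (which selects one of up to five market equilibrium values for $s_1$). In each of the resulting sub-cases, $D$ is a piecewise-quadratic function of $(x_1,x_2,s_1)$ (the pieces depending on the order of $x_1$, $s_1$, $x_2$), while the NE conditions impose linear constraints; the quantity $D - a(s_1^2 + (1-s_1)^2)$ is therefore a quadratic on a polyhedron and attains its maximum at a vertex, where the desired inequality reduces to elementary algebra.

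The main technical burden is the bookkeeping of this case analysis. The guiding intuition is that in case (i) the NE conditions $x_1 \geq s_1(1-a)$ and $x_2 \geq (1-s_1)(1-a)$ sum to $x_1 + x_2 \geq 1 - a$, with equality exactly when both inequalities are tight, giving $x_1 = s_1(1-a)$ and $x_2 = (1-s_1)(1-a)$. Among such tight configurations, the market equilibrium with $s_1 = 1/2$ (which simultaneously minimizes the popularity term $a(s_1^2 + (1-s_1)^2)$) forces $x_1 = x_2 = (1-a)/2$; and this doubly-tight symmetric configuration yields the worst-case welfare, while all other sub-cases are strictly better.
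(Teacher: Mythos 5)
Your first step is correct and complete: the interval in Proposition \ref{prop:NE}(i) does collapse to $\{\tfrac12\}$ at $x_1=x_2=\tfrac{1-a}{2}$, the equal split is a valid market equilibrium there, and the welfare computation $\theta+\tfrac a2-\tfrac{1+a^2}{4}=\theta-\tfrac{(1-a)^2}{4}$ is right. Your second step is, in spirit, exactly the paper's route: an exhaustive case analysis over the three regions of Proposition \ref{prop:NE} crossed with the five market-equilibrium types of Proposition \ref{pro:vote}, each case being a quadratic objective under linear constraints (the paper solves these constrained problems and displays one of the fifteen).

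The genuine gap is the device you use to dispatch those cases: the claim that ``a quadratic on a polyhedron attains its maximum at a vertex'' is false without convexity, and convexity fails here. The quantity you maximize, $D-a\bigl(s_1^2+(1-s_1)^2\bigr)$, is a difference of convex terms, and after substituting the market-equilibrium expression for $s_1$ the reduced Hessian can be indefinite: for the interior-split equilibrium (type (iii)), $s_1=\frac{x_1+x_2-a}{2(1-a)}$, at $a=\tfrac12$ the Hessian in $(x_1,x_2)$ is $\left(\begin{smallmatrix}0&-2\\-2&0\end{smallmatrix}\right)$, so the objective is strictly concave along the direction $(1,1)$ and its maximum on a face parallel to that direction (e.g.\ a face of the form $x_2-x_1=\mathrm{const}$) can lie in the relative interior, not at a vertex. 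Checking vertices only could therefore ``certify'' the bound while missing the true maximizer; you need stationarity/KKT analysis on every face of each polyhedron, which is precisely the bookkeeping the paper's phrase ``solving the above optimization problem'' hides and that your shortcut does not replace. Two smaller points: the closing paragraph (worst NE has both constraints of region (i) tight, hence $s_1=\tfrac12$ and $x_1=x_2=\tfrac{1-a}{2}$) is offered as intuition but is doing real work that is never justified; and the symmetry $(x_1,x_2,s_1)\mapsto(1-x_2,1-x_1,1-s_1)$ only identifies regions (i) and (iii) with each other, while region (ii) is self-symmetric and still has to be analyzed, so you cannot literally ``restrict to case (i).''
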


\begin{proof}
    Given that there are three different cases in Proposition \ref{prop:NE} and five different market equilibria (Proposition \ref{pro:vote}), we have to consider 15 cases. We illustrate the equilibrium that yield the lowest welfare. The other 14 cases can be analyzed similarly.

    Assume that $x_1\leq x_2\leq \frac12$ and consider market equilibrium $(iv)$. Then we want to minimize the welfare while guaranteeing that $(x_1,x_2)$ and $(s_1,1-s_1)$ defined by market equilibrium $(iv)$ is an NE, that is, we want to solve
    \begin{align*}
    \displaystyle\min_{s_1,x_1,x_2}\quad&\theta+ a s_1^2+a(1-s_1)^2 - \int_0^{x_1}(x_1-t)dt-\int_{x_1}^{s_1}(t-x_1)dt-\int_{s_1}^1 (t-x_2)dt  \\
    &0\leq x_1\leq x_2\leq \frac{1}{2}+\frac{x_2-x_1}{2a}\leq\frac12,\\
    &x_2-x_1\leq a,\\
    &\frac{1-x_2-a}{1-a}\leq  \frac{1}{2}+\frac{x_2-x_1}{2a}\leq\frac{x_1}{1-a}.
    \end{align*}
    Solving the above optimization problem yields $x_1=x_2=(1-a)/2$ and $s_1=\frac12$.
\end{proof}

Combining Proposition \ref{pro:opt} and Proposition \ref{pro:ana}, we obtain:

\begin{proposition}\label{pro:poa}
$$PoA(a,\theta)=\begin{cases}
\displaystyle\frac{\theta-(\frac18-\frac{a}{2})}{\theta-(1-a)^2/4}& \text{ if }a\leq \frac14,\\
~~\\
\displaystyle\frac{\theta-(\frac14-a)}{\theta-(1-a)^2/4}& \text{ if }\frac14<a\leq 1.
\end{cases}$$
\end{proposition}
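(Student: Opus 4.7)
The plan is to observe that Proposition \ref{pro:poa} is essentially a direct bookkeeping consequence of the two preceding propositions, once we unpack the definition of the price of anarchy. By definition,
$$PoA(a,\theta)=\frac{\max_{x_1,x_2,s_1} W_a(x_1,x_2,s_1,1-s_1,\theta)}{\min_{(x_1^*,x_2^*,s_1^*)\in NE} W_a(x_1^*,x_2^*,s_1^*,1-s_1^*,\theta)},$$
so I would handle the numerator and the denominator separately.

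For the denominator, I would simply invoke Proposition \ref{pro:ana}, which identifies the worst NE as $x_1=x_2=\frac{1-a}{2}$ with $s_1=\tfrac12$ and gives welfare $\theta-\frac{(1-a)^2}{4}$. No further computation is required at this point; this fixes the denominator in both cases of the piecewise expression.

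For the numerator, I would appeal to Proposition \ref{pro:opt}. When $a\leq\frac14$, the optimum is attained at $(x_1,x_2,s_1)=(\tfrac14,\tfrac34,\tfrac12)$ with welfare $\theta-\tfrac18+\tfrac{a}{2}$, which I can rewrite as $\theta-(\tfrac18-\tfrac{a}{2})$ to match the stated form. When $a>\frac14$, the optimum is attained at a configuration in which one firm covers the whole market from location $\tfrac12$, with welfare $\theta-\tfrac14+a=\theta-(\tfrac14-a)$. Taking the ratio in each regime then produces exactly the two cases of the proposition.

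There is no genuine obstacle here; the only thing to be careful about is the boundary case $a=\tfrac14$, where both expressions in Proposition \ref{pro:opt} coincide, so the two pieces of the piecewise formula agree and the function is continuous at $a=\tfrac14$. I would include a single sentence verifying this continuity and noting that the restriction $a\leq 1$ in the second branch is simply the ambient assumption on the externality magnitude from the model, so Proposition \ref{pro:ana} applies throughout the relevant range.
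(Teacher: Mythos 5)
Your proposal is correct and matches the paper's argument exactly: the paper likewise obtains the formula by combining Proposition \ref{pro:opt} (for the numerator) with Proposition \ref{pro:ana} (for the denominator), with no further computation. Your extra remark on continuity at $a=\tfrac14$ is a harmless addition.
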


\begin{figure}[h]
\centering
\includegraphics[scale=0.7]{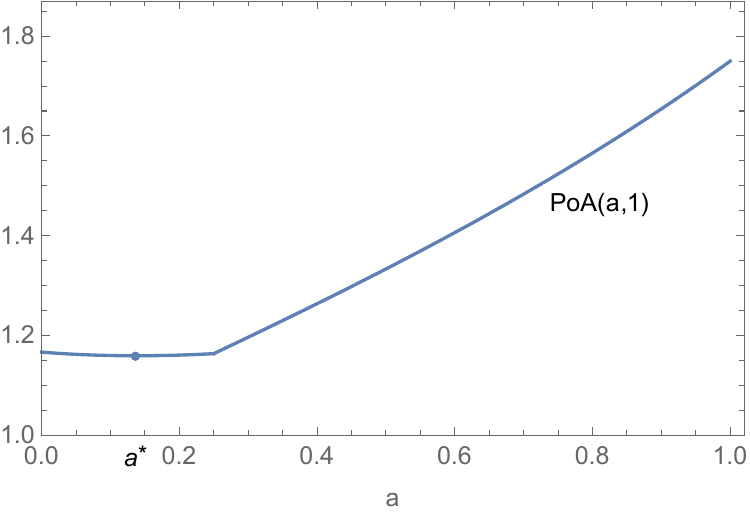}
\caption{\label{fig:poa2} Price of anarchy as a function of $a$ for $\theta=1$.}
\end{figure}

We find that $PoA(a,\theta)$ is decreasing with $a$ until $a^*=\frac{1-8\theta+\sqrt{64\theta^2-16\theta+9}}{4}$, then increasing with $a$. Therefore, the PoA is minimized for $a=a^*$. For $\theta=1$, we find $a^*=\frac{\sqrt{57}-7}{4} \simeq 0.137$ and $PoA(1,a^*) \simeq 1.159$.

\subsubsection{Price of stability}

\begin{proposition}\label{pro:sta}
A NE that maximizes the social welfare is as follows:\\
~~\\
If $a\leq \frac{2-\sqrt{2}}{2}$, then $(x_1,x_2)=\left(\frac{1-a}{2},\frac{1+a}{2}\right)$ and $s_1=\frac12$,  with a welfare of $\theta-(1-4a+2a^2)$.\\
~~\\
If $ \frac{2-\sqrt{2}}{2} \leq a\leq \frac12$, then $(x_1,x_2)=\left(\frac12-a,\frac12\right)$ and $s_1=\frac{1-2a}{2(1-a)}$,  with a welfare of $\theta-\frac{1 - 6 a + 12 a^2 - 12 a^3 + 4 a^4}{4 (1-a)^2}$.\\
~~\\
If $a>\frac12$, then $(x_1,x_2)=\left(0,\frac12\right)$ and $s_1=0$, with a welfare of $\theta - \frac14 + a$.
\end{proposition}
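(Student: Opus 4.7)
The plan is to mimic the structure of the proof of Proposition \ref{pro:ana}: enumerate the $15$ combinations of the three NE subregions of Proposition \ref{prop:NE} with the five market equilibria of Proposition \ref{pro:vote}, and in each combination solve
\[
\max\, W_a(x_1,x_2,s_1,1-s_1,\theta) \quad \text{subject to the market-equilibrium formula for $s_1$ and the NE interval constraint on $s_1$.}
\]
The PoS-maximizing NE is the best of the resulting case maxima. The left-right symmetry of the model halves the number of genuinely distinct sub-problems, and the welfare is piecewise quadratic in $(x_1,x_2,s_1)$, so each sub-problem reduces to either a first-order condition on a quadratic or a boundary evaluation on the NE constraint interval.

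Guided by the structure $W_a = \theta + a(s_1^2+(1-s_1)^2) - (\text{distance integrals})$, I would first propose three natural candidate maximizers. For small $a$, the distance-integral term dominates and the welfare favors the widest-spread NE with $s_1 = \tfrac12$; within case (ii) of Proposition \ref{prop:NE} the NE interval for $s_1$ collapses exactly to $\{\tfrac12\}$ at $(x_1,x_2) = \bigl(\tfrac{1-a}{2},\tfrac{1+a}{2}\bigr)$, which is also the most spread-out symmetric NE from Proposition \ref{pro:ex_sym}; its welfare evaluates to $\theta - \tfrac{1-4a+2a^2}{4}$. For intermediate $a$, the convex term $a(s_1^2+(1-s_1)^2)$ rewards an asymmetric split; the relevant candidate is the NE of case (i) at the boundary where $s_1 = x_1/(1-a)$ is tight together with market equilibrium (iii), which pins down $(x_1,x_2,s_1) = \bigl(\tfrac12-a,\tfrac12,\tfrac{1-2a}{2(1-a)}\bigr)$ with welfare $\theta - \tfrac{1-6a+12a^2-12a^3+4a^4}{4(1-a)^2}$ after simplification. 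For $a \ge \tfrac12$, case (ii) of Proposition \ref{prop:NE} together with market equilibrium (i) admits $(0,\tfrac12,0,1)$ as an NE, and Proposition \ref{pro:opt} shows it coincides with the social optimum, so the welfare is $\theta - \tfrac14 + a$ and $\mathrm{PoS}=1$.

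To assemble the final statement I would then (a) check that in each of the remaining case combinations the constrained maximum is dominated by one of the three listed candidates (typically because either the NE or the market-equilibrium constraint forces a mirror image of an already-considered configuration, or is outright infeasible in the relevant range of $a$); and (b) compare the three candidate welfares across $a$. Setting the first and second candidates equal,
\[
\frac{1-4a+2a^2}{4} = \frac{1-6a+12a^2-12a^3+4a^4}{4(1-a)^2},
\]
clears to $a^2(1-4a+2a^2) = 0$, whose only root in $(0,\tfrac12)$ is $a = (2-\sqrt{2})/2$; the second and third candidates coincide at $a = \tfrac12$. Sign-checking the difference on either side of each transition confirms the three-piece formula.

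The main obstacle is combinatorial book-keeping: systematically writing out the fifteen constrained optima and certifying that the listed candidates dominate. A subtler point is that in the first two candidates the welfare is maximized on the \emph{boundary} of the NE constraint interval rather than at an interior critical point---this is what forces $x_1 = \tfrac{1-a}{2}$ and $x_1 = \tfrac12-a$ to appear rather than the unconstrained optima of the associated welfare-only problems, and it is the place where Proposition \ref{prop:NE} actually enters the argument.
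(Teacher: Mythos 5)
Your proposal follows essentially the same route as the paper's proof: it enumerates the same fifteen combinations of NE regions (Proposition \ref{prop:NE}) with market-equilibrium types (Proposition \ref{pro:vote}), solves the same two binding constrained maximizations (yielding $\left(\frac{1-a}{2},\frac{1+a}{2}\right)$ with $s_1=\frac12$ and $\left(\frac12-a,\frac12\right)$ with $s_1=\frac{1-2a}{2(1-a)}$, crossing over at $a=\frac{2-\sqrt{2}}{2}$), and settles $a>\frac12$ by observing that the NE $(0,\frac12,0,1)$ attains the social optimum of Proposition \ref{pro:opt}. Incidentally, your welfare value $\theta-\frac{1-4a+2a^2}{4}$ in the first regime is the correct one (it is consistent with Proposition \ref{pro:pos}); the proposition statement's $\theta-(1-4a+2a^2)$ omits the division by $4$.
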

\begin{proof}
    Given that there are three different cases in Proposition \ref{prop:NE} and five different market equilibria (Proposition \ref{pro:vote}), we have to consider 15 cases. We illustrate the two equilibria that yield the highest welfare. The other 13 cases can be analyzed similarly.

    Case 1. Assume that $x_1\leq \frac12\leq x_2$ and consider market equilibrium $(iii)$. Then we want to maximize the welfare while guaranteeing that $(x_1,x_2)$ and $(s_1,1-s_1)$ defined by market equilibrium $(iii)$ is an NE, that is, we want to solve
    \begin{align*}
    \displaystyle\max_{s_1,x_1,x_2}\quad&\theta+ a s_1^2+a(1-s_1)^2 - \int_0^{x_1}(x_1-t)dt-\int_{x_1}^{s_1}(t-x_1)dt-\int_{s_1}^{x_2} (x_2-t)dt-\int_{x_2}^1 (t-x_2)dt  \\
    &0\leq x_1\leq \frac12\leq x_2\leq 1,\\
    &x_1\leq \frac{x_1+x_2-a}{2(1-a)}\leq x_2,\\
    &x_2-x_1\leq a,\\
    &\frac{x_2-a}{1-a}\leq  \frac{x_1+x_2-a}{2(1-a)}\leq\frac{x_1}{1-a}.
    \end{align*}
    Solving the above optimization problem yields $x_1=(1-a)/2$, $x_2=(1+a)/2$ and $s_1=\frac12$, which is the optimal solution for $a\leq (2-\sqrt{2})/2$.

    Case 2. Assume that $x_1\leq x_2\leq\frac12$ and consider market equilibrium $(iii)$. Then we want to maximize the welfare while guaranteeing that $(x_1,x_2)$ and $(s_1,1-s_1)$ defined by market equilibrium $(iii)$ is an NE, that is, we want to solve
    \begin{align*}
    \max\quad&\theta+ a s_1^2+a(1-s_1)^2 - \int_0^{x_1}(x_1-t)dt-\int_{x_1}^{s_1}(t-x_1)dt-\int_{s_1}^{x_2} (x_2-t)dt-\int_{x_2}^1 (t-x_2)dt  \\
    &0\leq x_1\leq \frac{x_1+x_2-a}{2(1-a)}\leq x_2\leq \frac12,\\
    &x_2-x_1\leq a,\\
    &\frac{1-x_2-a}{1-a}\leq  \frac{x_1+x_2-a}{2(1-a)}\leq\frac{x_1}{1-a}.
    \end{align*}
    Solving the above optimization problem yields $x_1=\frac12-a$, $x_2=\frac12$ and $s_1=(1-2a)/(2(1-a))$, which is the optimal solution for $(2-\sqrt{2})/2\leq a\leq \frac12$.
\end{proof}

Combining Proposition \ref{pro:opt} and Proposition \ref{pro:sta}, we obtain:

\begin{proposition} \label{pro:pos}
$$PoS(a,\theta)=\begin{cases}
\displaystyle\frac{\theta-(\frac18-a/2)}{\theta-(1-4a+2a^2)/4}& \text{ if }a\leq \frac14,\\
~~\\
\displaystyle\frac{\theta-(\frac14-a)}{\theta-(1-4a+2a^2)/4}& \text{ if }\frac14<a\leq (2-\sqrt{2})/2,\\
~~\\
\displaystyle\frac{\theta-(\frac14-a)}{\theta-(1-4a+2a^2)(1-2a+2a^2)/(4 (1-a)^2)}& \text{ if }(2-\sqrt{2})/2< a\leq \frac12,\\
~~\\
\displaystyle 1&\text{ if }a> \frac12
\end{cases}$$
\end{proposition}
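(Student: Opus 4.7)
The plan is to derive each of the four regimes of $PoS(a,\theta)$ by direct substitution, taking the numerator from Proposition~\ref{pro:opt} (maximum social welfare) and the denominator from Proposition~\ref{pro:sta} (welfare at the welfare-maximizing NE), together with one extra argument for $a>\tfrac12$, which is not covered by Proposition~\ref{pro:sta}.

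First I would line up the three thresholds $a=\tfrac14$, $a=\tfrac{2-\sqrt 2}{2}$, and $a=\tfrac12$. For $a\leq \tfrac14$ (noting $\tfrac14<\tfrac{2-\sqrt 2}{2}$), the optimum welfare is $\theta-\tfrac18+\tfrac a2$ by Proposition~\ref{pro:opt}, while the best NE welfare is $\theta-\tfrac{1-4a+2a^2}{4}$ from the first case of Proposition~\ref{pro:sta}; dividing yields case 1 of the statement. For $\tfrac14<a\leq \tfrac{2-\sqrt 2}{2}$ only the numerator switches to $\theta-\tfrac14+a$ while the denominator is unchanged, giving case 2.

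For $\tfrac{2-\sqrt 2}{2}<a\leq \tfrac12$, Proposition~\ref{pro:sta} gives best NE welfare $\theta-\tfrac{1-6a+12a^2-12a^3+4a^4}{4(1-a)^2}$. To match the form stated in Proposition~\ref{pro:pos}, I would verify the polynomial identity
\[
1-6a+12a^2-12a^3+4a^4 \;=\; (1-4a+2a^2)(1-2a+2a^2),
\]
which is a routine expansion. Combined with the optimum $\theta-\tfrac14+a$, this gives case 3.

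For $a>\tfrac12$, Proposition~\ref{pro:sta} no longer applies, so I would instead exhibit an NE that attains the social optimum. By Proposition~\ref{pro:opt}, the optimum is realized at $(x_1,x_2,s_1,s_2)=(0,\tfrac12,0,1)$. Since $x_2-x_1=\tfrac12\leq a$, Proposition~\ref{pro:vote} shows $(0,1)$ is a market equilibrium, and case (i) of Proposition~\ref{prop:NE} requires $s_1\in\bigl[\tfrac{1-x_2-a}{1-a},\tfrac{x_1}{1-a}\bigr]=\bigl[\tfrac{1/2-a}{1-a},0\bigr]$, an interval that contains $0$ precisely when $a\geq \tfrac12$. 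Hence the optimum itself is an NE and $PoS(a,\theta)=1$. The main obstacle is really just careful bookkeeping across the thresholds; the only non-routine algebraic step is the factorization above, and the only extra content beyond substitution is the NE check for $a>\tfrac12$.
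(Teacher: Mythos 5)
Your proposal is correct and follows essentially the same route as the paper, which obtains Proposition~\ref{pro:pos} by directly combining Proposition~\ref{pro:opt} (the optimum in the numerator) with Proposition~\ref{pro:sta} (the best-NE welfare in the denominator), the identity $1-6a+12a^2-12a^3+4a^4=(1-4a+2a^2)(1-2a+2a^2)$ being the only algebra involved. Two minor remarks: Proposition~\ref{pro:sta} does in fact cover $a>\frac12$ (its third case exhibits the NE $(x_1,x_2)=(0,\frac12)$, $s_1=0$ with welfare $\theta-\frac14+a$, so your separate NE verification, while valid, just re-derives it), and the first-case denominator you quote as $\theta-\frac{1-4a+2a^2}{4}$ is the correct welfare at $\left(\frac{1-a}{2},\frac{1+a}{2},\frac12\right)$ even though the statement of Proposition~\ref{pro:sta} writes it without the factor $\frac14$.
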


\begin{figure}[h]
\centering
\includegraphics[scale=0.7]{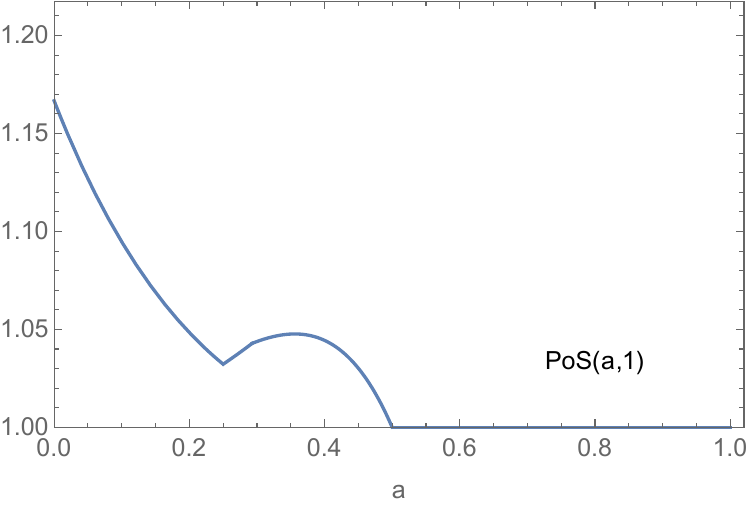}
\caption{\label{fig:pos} Price of stability as a function of $a$ for $\theta=1$.}
\end{figure}

When considering the best-case welfare analysis (price of stability), we observe that the inefficiency is not monotonic with respect to the parameter $a$. While efficiency is obtained for $a \geq \frac12$, it is not the case for $a < \frac12$.

\section{Appendix}\label{se:appendix}

\begin{proof}[Proof of Claim 1 in Theorem \ref{prop:NE}.]
Indeed, if $x_j\leq \frac12$ and $x_j+a\geq 1$, then $|x'_i-x_j|\leq a$ for all $0\leq x'_i\leq 1$. Suppose candidate $i$ deviates to $x'_i$. Then by Proposition \ref{pro:vote}, there is a market equilibrium with $s_i=0$ and thus such a deviation is not profitable.
So assume that $x_j\leq \frac12$ and $x_j+a< 1$. Suppose candidate $i$ deviates to $x'_i$. If $|x'_i-x_j|\leq a$, then by Proposition \ref{pro:vote}, there is a market equilibrium with $s_i=0$  and thus such a deviation is not profitable.
If $x'_i<x_j-a$ and $x_j-a>0$, then from Proposition \ref{pro:vote} the market equilibrium is unique and $s'_i=\frac{x'_i+x_j-a}{2(1-a)}$, which increases in $x'_i$. So the most beneficial location is when candidate $i$ locates at $x_j-a$ with $s'_i=\frac{x_j-a}{1-a}$.
If $x'_i>x_j+a$ and $x_j+a<1$, then by symmetry a similar argument implies that the most beneficial location is when candidate $i$ locates at $x_j+a$ with $s'_i=1-\frac{x_j}{1-a}$. Since $x_j\leq \frac12$, this deviation is more beneficial than locating at $x_j-a$.
\end{proof}

\begin{proof}[Proof of Proposition \ref{pro:ex_frac12}.] Case by case analysis when $a=\frac12$.\\
We first investigate equilibria where $x_1,x_2 \leq \frac12$.\\
$(i)$: $s_1=0$ is an NE if and only if $x_2-x_1\leq \frac12$ and $0 \in [1-2x_2,2x_1]$, therefore if and only if $x_1 \in [0,\frac12]$ and $x_2 = \frac12$.\\
$(ii)$: $s_1=\frac12-x_2+x_1$ is an NE if and only if $x_2-x_1\leq \frac12$, $\frac12\leq x_2$ and $\frac12-x_2+x_1 \in [1-2x_2,2x_1]$, therefore if and only if $x_1 \in [0,\frac12]$ and $x_2 = \frac12$.\\
$(iii)$: $s_1=x_1+x_2-\frac12$ is an NE if and only if $x_2-x_1\leq \frac12$, $x_1\leq \frac12\leq x_2$ and $x_1+x_2-\frac12 \in [1-2x_2,2x_1]$, therefore if and only if $x_1 \in [0,\frac12]$ and $x_2 = \frac12$.\\
$(iv)$: $s_1=\frac12-x_1+x_2$ is an NE if and only if $x_2-x_1\leq \frac12$, $x_1\leq \frac12$ and $\frac12-x_1+x_2 \in [1-2x_2,2x_1]$, therefore if and only if $x_1 \in [\frac14,\frac13]$ and $x_1\leq x_2 \leq 3 x_1 - \frac12$, and $x_1 \in [\frac13,\frac12]$ and $x_1\leq x_2 \leq \frac12$.\\
$(v)$: $s_1=1$ is a market equilibrium if and only if $x_2-x_1\leq \frac12$ and $1 \in [1-2x_2,2x_1]$, therefore if and only if $x_1 =x_2= \frac12$. 

We now investigate equilibrium where $x_1 \leq \frac12 \leq x_2$.\\
$(i)$: $s_1=0$ is an NE if and only if $x_2-x_1\leq \frac12$ and $0 \in [2x_2-1,2x_1]$, therefore if and only if $x_1 \in [0,\frac12]$ and $x_2 = \frac12$.\\
$(ii)$: $s_1=\frac12-x_2+x_1$ is an NE if and only if $x_2-x_1\leq \frac12$, $\frac12\leq x_2$ and $\frac12-x_2+x_1 \in [2x_2-1,2x_1]$, therefore if and only if $x_1\in[0,\frac12]$ and $x_2 \leq \frac{1}{2}+\frac{x_1}{3}$.\\
$(iii)$: $s_1=x_1+x_2-\frac12$ is an NE if and only if $x_2-x_1\leq \frac12$, $x_1\leq \frac12\leq x_2$ and $x_1+x_2-\frac12 \in [2x_2-1,2x_1]$, therefore if and only if $x_1\in[0,\frac12]$ and $x_2\leq x_1+ \frac12$.\\
$(iv)$: $s_1=\frac12-x_1+x_2$ is an NE if and only if $x_2-x_1\leq \frac12$, $x_1\leq \frac12$ and $\frac12-x_1+x_2 \in [2x_2-1,2x_1]$, therefore if and only if $x_1\in[\frac13,\frac12]$ and $x_2 \leq 3 x_1 - \frac12$.\\
$(v)$: $s_1=1$ is a market equilibrium if and only if $x_2-x_1\leq \frac12$ and $1 \in [2x_2-1,2x_1]$, therefore if and only if $x_1=\frac12$ and $x_2 \in [\frac12,1]$.

We lastly investigate equilibrium where $\frac12\leq x_1 \leq x_2$.\\
$(i)$: $s_1=0$ is an NE if and only if $x_2-x_1\leq \frac12$ and $0 \in [2x_2-1,2-2x_1]$, therefore if and only if $x_1=x_2 = \frac12$.\\
$(ii)$: $s_1=\frac12-x_2+x_1$ is an NE if and only if $x_2-x_1\leq \frac12$, $\frac12\leq x_2$ and $\frac12-x_2+x_1 \in [2x_2-1,2-2x_1]$, therefore if and only if $x_1\in[\frac12,\frac34]$ and $x_1\leq x_2 \leq \frac{1}{2}+\frac{x_1}{3}$.\\
$(iii)$: $s_1=x_1+x_2-\frac12$ is an NE if and only if $x_2-x_1\leq \frac12$, $x_1\leq \frac12\leq x_2$ and $x_1+x_2-\frac12 \in [2x_2-1,2-2x_1]$, therefore if and only if $x_1=\frac12$ and $x_2\in[\frac12,1]$.\\
$(iv)$: $s_1=\frac12-x_1+x_2$ is an NE if and only if $x_2-x_1\leq \frac12$, $x_1\leq \frac12$ and $\frac12-x_1+x_2 \in [2x_2-1,2-2x_1]$, therefore if and only if $x_1=\frac12$ and $x_2\in[\frac12,1]$.\\
$(v)$: $s_1=1$ is a market equilibrium if and only if $x_2-x_1\leq \frac12$ and $1 \in [2x_2-1,2-2x_1]$, therefore if and only if $x_1=\frac12$ and $x_2\in[\frac12,1]$.
\end{proof}

\bibliographystyle{plainnat}
\bibliography{popular}
\end{document}